\documentclass[12pt,reqno]{amsart}

\usepackage{amsmath}
\usepackage{amsthm}
\usepackage{amssymb}
\usepackage{graphicx}
\usepackage{fullpage}

\newcommand{\N}{\mathbb{N}}
\newcommand{\R}{\mathbb{R}}
\renewcommand{\S}{\mathbb{S}}

\newcommand{\Q}{\mathbb{Q}}

\renewcommand{\d}{\mathrm{d}}
\newcommand{\mc}{\mathcal}

\newcommand{\gT}{g_{\mathrm T}}

\newtheorem{lemma}{Lemma}[section]
\newtheorem{proposition}[lemma]{Proposition}
\newtheorem{theorem}[lemma]{Theorem}
\newtheorem{corollary}[lemma]{Corollary}
\theoremstyle{remark}
\newtheorem{remark}[lemma]{Remark}

\theoremstyle{definition}
\newtheorem{definition}[lemma]{Definition}

\numberwithin{equation}{section}
\numberwithin{table}{section}

\newcommand{\LR}[1]{{\langle #1 \rangle}}
\newcommand{\EQ}[1]{\begin{equation}\begin{split} #1 \end{split}\end{equation}}

\def\I{\infty}
\def\f{\frac}
\def\al{\alpha}
\def\eps{\varepsilon}
\def\nn{\nonumber}
\def\fy{\varphi}

\title{Linear stability of the Skyrmion}

\author{Matthew Creek}
\address{Department of Mathematics, University of Chicago, 5734 South University Avenue, Chicago, IL 60637, U.S.A.}
\email{mcreek@math.uchicago.edu}

\author{Roland Donninger}
\address{Rheinische Friedrich-Wilhelms-Universit\"at Bonn,
Mathematisches Institut, Endenicher Allee 60, D-53115 Bonn, Germany}
\email{donninge@math.uni-bonn.de}

\author{Wilhelm Schlag}
\address{Department of Mathematics, University of Chicago, 5734 South University Avenue, Chicago, IL 60637, U.S.A.}
\email{schlag@math.uchicago.edu}

\author{Stanley Snelson}
\address{Department of Mathematics, University of Chicago, 5734 South University Avenue, Chicago, IL 60637, U.S.A.}
\email{snelson@math.uchicago.edu}

\thanks{Roland Donninger is supported by the Alexander von Humboldt Foundation via
a Sofja Kovalevskaja Award endowed by the German Federal Ministry of Education
and Research. Partial support by the DFG, CRC 1060, is also gratefully acknowledged.
Furthermore, Roland Donninger would like to thank Pawe\l{} Biernat for many helpful discussions.
Mathew Creek and Stanley Snelson are partially supported by NSF grant DMS-1246999. Wilhelm Schlag is partially supported by NSF}

\begin{document}
\begin{abstract}
We give a rigorous proof for the linear stability of the Skyrmion.
In addition, we provide new proofs for the existence of the Skyrmion and the GGMT bound.
\end{abstract}

\maketitle

\section{Introduction}
\noindent In the 1960s and 1970s there was a lot of interest in classical relativistic nonlinear field theories as models for the interaction of elementary particles. The idea was to describe particles by solitons, i.e., static solutions of finite energy.
Due to the success of the standard model, where particles are described by \emph{linear} (but quantized) fields, this original motivation became somewhat moot.    
However, classical nonlinear field theories continue to be an active area of research, albeit for different reasons. 
They are interesting as models for Einstein's equation of general relativity,  in the context of nonperturbative quantum field theory or in the description of ferromagnetism. Furthermore, there is an ever-growing interest from the pure mathematical perspective.

A rich source for field theories with ``natural'' nonlinearities are geometric action principles. One of the most prominent examples of this kind is the SU(2) sigma model \cite{GelLev60} that arises from the wave maps action
\[ \mc S_{\mathrm{WM}}(u)=\int_{\R^{1,d}}\eta^{\mu\nu}(u^* g)_{\mu\nu}=\int_{\R^{1,d}}\eta^{\mu\nu}\partial_\mu u^A \partial_\nu u^B g_{AB}\circ u. \]
Here, the field $u$ is a map from $(1+d)$-dimensional Minkowski space $(\R^{1,d},\eta)$ to a Riemannian manifold $(M,g)$ with metric $g$.
Geometrically, the wave maps Lagrangian is the trace of the pull-back of the metric $g$ under the map $u$.
A typical choice is $M=\S^d$ with $g$ the standard round metric and in the following, we restrict ourselves to this case.
For $d=3$, one obtains the classical SU(2) sigma model. In general, the Euler-Lagrange equation associated to the action $\mc S_{\mathrm{WM}}$ is called the wave maps equation. 
Unfortunately, the SU(2) sigma model does not admit solitons and it develops singularities in finite time \cite{Sha88, BizChmTab00, Don11}.
One way to recover solitons is to lower the spatial dimension to $d=2$ but this is less interesting from a physical point of view and, even worse, the corresponding model still develops singularities in finite time \cite{BizChmTab01, KriSchTat08, RodSte10, RapRod12}.
Consequently, Skyrme \cite{Sky61} proposed to modify the wave maps Lagrangian by adding higher-order terms. This leads to the (generalized) Skyrme action \cite{ MakRybSan93}
\[ \mc S_{\mathrm{Sky}}(u)=\mc S_{\mathrm{WM}}(u)+\frac12 \int_{\R^{1,d}}\Big [
[\eta^{\mu\nu}(u^* g)_{\mu\nu}]^2-(u^* g)_{\mu\nu}(u^* g)^{\mu\nu} \Big ]. \]  
Skyrme's modification breaks the scaling invariance which makes the model more rigid.
Heuristically speaking, rigidity favors the existence of solitons and makes finite-time blowup less likely.
The original Skyrme model arises from the action $\mc S_{\mathrm{Sky}}$ in the case $d=3$ and $M=\S^3$.

By using standard spherical coordinates $(t,r,\theta,\varphi)$ on $\R^{1,3}$, one may consider so-called co-rotational maps $u: \R^{1,3}\to \S^3$ of the form $u(t,r,\theta,\varphi)=(\psi(t,r), \theta,\varphi)$. Under this symmetry reduction the Skyrme model reduces to the scalar quasilinear wave equation
\begin{equation}
\label{eq:Skyevol} 
(w\psi_t)_t-(w\psi_r)_r+\sin(2\psi)+\sin(2\psi)\left (\frac{\sin^2 \psi}{r^2}+\psi_r^2-\psi_t^2\right )=0 
\end{equation}
for the function $\psi=\psi(t,r)$, where $w=r^2+2\sin^2 \psi$.
It is well-known that there exists a static solution $F_0\in C^\infty[0,\infty)$ to Eq.~\eqref{eq:Skyevol} with the property that $F_0(0)=0$ and $\lim_{r\to\infty}F_0(r)=\pi$. This was proved by variational methods \cite{KapLad83} and ODE techniques \cite{McLTro91}.
In fact, $F_0$ is the \emph{unique} static solution with these boundary values \cite{McLTro91} and called the \emph{Skyrmion}.
Unfortunately, the Skyrmion is not known in closed form and
as a consequence, even the most basic questions concerning its role in the dynamics remain unanswered to this day.

\subsection{Stability of the Skyrmion}
Numerical studies \cite{BizChmRos07} strongly suggest that the Skyrmion is a global attractor for the nonlinear flow. In particular, $F_0$ should be stable under nonlinear perturbations. A first step in approaching this problem from a rigorous point of view is to consider the \emph{linear} stability of $F_0$.
To this end, one inserts the ansatz $\psi(t,r)=F_0(r)+\phi(t,r)$ into Eq.~\eqref{eq:Skyevol} and linearizes in $\phi$.
This leads to the linear wave equation
\[ \varphi_{tt}-\varphi_{rr}+\frac{2}{r^2}\varphi+V(r)\varphi=0 \]
for the auxiliary variable $\varphi(t,r)=\sqrt{r^2+2\sin^2F_0(r)}\,\phi(t,r)$.
The potential $V$ is given by
\[ V=-4a^2\frac{1+3a^2+3a^4}{(1+2a^2)^2},\qquad a(r)=\frac{\sin F_0(r)}{r}. \]
Consequently, the linear stability of the Skyrmion is governed by the $\ell=1$ Schr\"odinger operator
\[ \mc A f(r):=-f''(r)+\frac{2}{r^2}f(r)+V(r)f(r) \]
on $L^2(0,\infty)$.
More precisely, the Skyrmion is linearly stable if and only if $\mc A$ has no negative eigenvalues.
Unfortunately, the analysis of $\mc A$ is difficult since the potential $V$ is negative and not known explicitly.
Consequently, the linear stability of $F_0$ hinges on the particular shape of $V$ and this renders the application of general soft arguments hopeless.
Our main result is the following.

\begin{theorem}
\label{thm:main}
The Schr\"odinger operator $\mc A$ does not have eigenvalues.
In particular, the Skyrmion $F_0$ is linearly stable.
\end{theorem}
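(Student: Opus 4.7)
The plan is to combine the two auxiliary results the abstract promises to revisit: a quantitative existence statement for the Skyrmion $F_0$, which gives pointwise control on $a(r)=\sin F_0(r)/r$ and hence on $V$, and the GGMT inequality, which bounds the number $N_\ell$ of negative bound states of a radial Schr\"odinger operator in terms of an explicit weighted $L^p$ norm of $V_-$. Once both are in place, the theorem reduces to checking that the GGMT integral sits strictly below the threshold that forces $N_1<1$, which for the integer-valued $N_1$ means $N_1=0$.

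First I would dispose of the easier parts of the spectrum. Since $F_0(r)\to\pi$ as $r\to\infty$, the ODE for $F_0$ yields $a(r)=O(r^{-3})$ and hence $V=O(r^{-6})$, so $V$ is short range: the essential spectrum equals $[0,\infty)$, there are no embedded positive eigenvalues by Kato/Agmon-type arguments, and a Frobenius analysis at $r=0$ combined with the two independent zero-energy behaviors at infinity ($r^2$ and $r^{-1}$, up to perturbations from $V$) rules out $E=0$ as an eigenvalue. For the negative part of the spectrum I would apply the GGMT bound
\[ N_\ell\ \le\ C_{p,\ell}\int_0^\infty r^{2p-2}\,V_-(r)^p\,\d r \qquad (p\ge 1) \]
with $\ell=1$. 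Substituting the explicit form of $V$ as a function of $a$ and using the Skyrmion envelope ($F_0(r)=\alpha r+O(r^3)$ near zero, $F_0(r)=\pi-\beta/r^2+O(r^{-4})$ near infinity, and monotonicity in between), the integral reduces to a quantity one can estimate by splitting into inner, intermediate and outer regions. A judicious choice of $p$ should then yield the strict inequality $C_{p,1}\int_0^\infty r^{2p-2}V_-^p\,\d r<1$, closing the argument.

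The principal obstacle is the quantitative tightness of the envelope for $F_0$ in the intermediate region, where $F_0$ transitions from linear growth to its asymptote and $|V|$ attains its maximum. The existence proof for $F_0$ must be engineered to produce sharp enough two-sided bounds rather than the merely qualitative existence that is classically available, so that the GGMT integral can be controlled analytically rather than by computer assistance. The leverage provided by optimizing over $p$ in the GGMT inequality is essential here: larger $p$ weighs the peak of $|V|$ more heavily but benefits from a smaller constant $C_{p,1}$, and it is the balance of these that one expects to yield just enough room to close the inequality.
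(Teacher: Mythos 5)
Your high-level strategy matches the paper's: invoke the GGMT bound with angular momentum $\ell=1$ and show that the resulting integral condition $\nu(V)<1$ holds, which excludes negative eigenvalues (and, via a small perturbation of $H$ by $-\eps e^{-|x|^2}$, also the zero eigenvalue and the zero resonance). You also correctly identify the bottleneck: the GGMT integral is sensitive to the precise shape of $V$ near the peak of $|V|$, so one needs \emph{quantitative} two-sided control on $F_0$, not just the classical existence/uniqueness.

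The gap is precisely at that bottleneck. Your plan is to derive a sharp analytic envelope for $F_0$ ($\alpha r + O(r^3)$ near the origin, $\pi-\beta/r^2 + O(r^{-4})$ at infinity, monotone in between), split the GGMT integral into inner/intermediate/outer regions, and optimize over $p$. This is wishful: the integral is dominated by the intermediate region where the asymptotic expansions give you essentially nothing, monotonicity of $F_0$ alone is far too weak, and there is no free parameter to optimize away the dependence on the unknown transition profile. The paper explicitly flags that soft arguments are ``hopeless'' here, and instead constructs a certified 42-term rational Chebyshev polynomial approximation $\gT$ to (a transform of) the Skyrmion, proves $\|\mc R(\gT)\|_{L^\infty}\le\frac{1}{500}$ via rational interval arithmetic, and closes a contraction-mapping argument in $W^{1,\infty}$ to show that the true Skyrmion is within $\frac{1}{150}$ of $\gT$. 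Only with these explicit sandwiching bounds $B(r)\le |a(r)|\le A(r)$ does the authors' single choice $p=4$ yield $\int_0^\infty r^7|V|^4\,\d r\le 130$ and hence $\nu(V)\le\frac{2275}{2592}<1$. Your proposal would need to supply a concrete, provable replacement for this computer-assisted construction, and as it stands it does not. A small technical error worth noting: the weight in GGMT is $r^{2p-1}$, not $r^{2p-2}$, and the angular-momentum refinement carries a factor $(2\ell+1)^{-(2p-1)}$ on the constant.
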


\subsection{Related work}
Due to the complexity of the field equation, there are not many rigorous results on dynamical aspects of the Skyrme model.
In \cite{GebNakRaj12}, small data global well-posedness and scattering is proved and \cite{Li12} establishes large-data global well-posedness. 
There is also some recent activity on the related but simpler Adkins-Nappi model, see e.g.~\cite{GebRaj10a, GebRaj10b, Law15}.
From a numerical point of view, the linear stability of the Skyrmion is addressed in \cite{HeuDroStr91} and
\cite{BizChmRos07} studies the nonlinear stability.
As far as the method of proof is concerned, we note that our approach is in parts inspired by \cite{CosHuaSch12}.

\subsection{Outline of the proof}
According to the GGMT bound, see \cite{GlaGroMarThi75, GlaGroMar78, ReeSim78} or Appendix \ref{app:GGMT}, the number of negative eigenvalues of $\mc A$ is bounded by
\[ \nu(V):=3^{-7}\frac{3^3\Gamma(8)}{4^4\Gamma(4)^2}\int_0^\infty r^7 |V(r)|^4\d r. \]
Consequently, our aim is to show that $\nu(V)<1$.
In fact, by a perturbative argument this also excludes the eigenvalue $0$
and there cannot be threshold resonances at zero energy since the decay of the recessive solution of $\mc A f=0$ is $1/r$ at infinity. 
In Appendix \ref{app:GGMT} we elaborate on this and give a new proof of the GGMT bound.

In order to show $\nu(V)<1$, we proceed by an explicit construction of the Skyrmion $F_0$. In particular, this yields a new proof for the existence of the Skyrmion. Our approach is mildly computer-assisted in the sense that one has to perform a large number of elementary operations involving fractions. It is worth noting that all computations are done in $\Q$, i.e., they are free of rounding or truncation errors. We also emphasize that the proof does not require a computer algebra system. Consequently, the necessary computations can easily be carried out using any programming language that supports fraction arithmetic. A natural choice is {\tt Python} which is open source and freely available for all common operating systems.

In the following, we give a brief outline of the main steps in the proof.

\begin{itemize}
\item We consider Eq.~\eqref{eq:Skyevol} for static solutions $\psi(t,r)=F(r)$ and change variables according to
\[ F(r)=2\arctan\left (r(1+r)g\left (\frac{r-1}{r+1}\right )\right ). \]
The new independent variable $x=\frac{r-1}{r+1}$ allows us to compactify the problem by considering $x\in [-1,1]$.
Furthermore, the $\arctan$ removes the trigonometric functions in Eq.~\eqref{eq:Skyevol}.
Consequently, we obtain an equation of the form
\[
 \mc R(g)(x):=g''(x)+\Phi(x,g(x),g'(x))=0 
 \]
where $\Phi$ is a (fairly complicated) rational function of $3$ variables.

\item We numerically construct a very precise approximation to the Skyrmion. This is done by employing a Chebyshev pseudospectral method \cite{Boy01}. The expansion coefficients are rationalized to allow for error-free computations in the sequel. This leads to a polynomial $\gT(x)$ with rational coefficients and we rigorously prove that
$\|\mc R(\gT)\|_{L^\infty(-1,1)}\leq \frac{1}{500}$.
As a consequence, the construction of the Skyrmion reduces to finding a (small) correction $\delta(x)$ such that $\mc R(\gT+\delta)=0$.

\item Next, we obtain bounds on second derivatives of $\Phi$ by employing rational interval arithmetic. As a consequence, we obtain the representation
\[ \mc R(\gT+\delta)=\mc R(\gT)+\mc L \delta+\mc N(\delta) \]
with explicit bounds on the nonlinear remainder $\mc N$.
The linear operator $\mc L$ is also given explicitly in terms of $\gT$ and first derivatives of $\Phi$.

\item Again, by a Chebyshev pseudospectral method, we numerically construct an approximate fundamental system $\{u_-,u_+\}$ for the linear equation $\mc Lu=0$.
The functions $u_\pm$ satisfy $\tilde{\mc L}u_\pm=0$ for another linear operator $\tilde{\mc L}$ that is close to $\mc L$ in a suitable sense.
Using $u_\pm$ we construct an inverse $\tilde{\mc L}^{-1}$ to $\tilde{\mc L}$ which allows us to rewrite the equation $\mc R(\gT+\delta)=0$ as a fixed point problem
\[ \delta=-\tilde{\mc L}^{-1}\mc R(\gT)-\tilde{\mc L}^{-1}(\mc L-\tilde{\mc L})\delta-\tilde{\mc L}^{-1}\mc N(\delta)=:\mc K(\delta). \]
From the explicit form of $u_\pm$ we obtain rigorous and explicit bounds on the operator $\tilde{\mc L}^{-1}$.

\item Finally, we prove that $\mc K$ is a contraction on a small closed ball in $W^{1,\infty}(-1,1)$.
This yields the existence of a small correction $\delta(x)$ such that $\gT+\delta$ solves the transformed Skyrmion equation.
From the uniqueness of the Skyrmion we conclude that
\[ F_0(r)=2\arctan\left (r(1+r)(\gT+\delta)\left (\frac{r-1}{r+1}\right )\right ) \]
and the desired $\nu(V)<1$ follows by elementary estimates.

\end{itemize}

\subsection{Notation}
Throughout the paper we abbreviate $L^\infty:=L^\infty(-1,1)$ and also $W^{1,\infty}:=W^{1,\infty}(-1,1)$. For the norm in $W^{1,\infty}$ we use the convention
\[ \|f\|_{W^{1,\infty}}:=\sqrt{\|f'\|_{L^\infty}^2+\|f\|_{L^\infty}^2}. \]
The Wronskian $W(f,g)$ of two functions $f$ and $g$ is defined as $W(f,g):=fg'-f'g$.

\section{Preliminary transformations}

\noindent Static solutions $\psi(t,r)=F(r)$ of Eq.~\eqref{eq:Skyevol} satisfy the Skyrmion equation 
\begin{equation}
\label{eq:sky} \frac{\d}{\d r}\Big [\big (r^2+2\sin^2 F(r)\big )F'(r)\Big ]-\sin(2 F(r))\left [
F'(r)^2+\frac{\sin^2 F(r)}{r^2}+1 \right ]=0.
\end{equation}
The Skyrmion $F_0$ is the unique solution of Eq.~\eqref{eq:sky} satisfying $F_0(0)=0$ and
$\lim_{r\to\infty}F_0(r)=\pi$.
More precisely, we have $F_0(r)=\pi+O(r^{-2})$ as $r\to\infty$.
Furthermore, it is known that the Skyrmion is monotonically increasing \cite{McLTro91}.
In order to remove the trigonometric functions it is thus natural to define a new dependent variable 
$f: [0,\infty)\to \R$ by 
\[ F(r)=:2\arctan f(r). \]
Then we have
\begin{align*}
F'=\frac{2f'}{1+f^2},\qquad F''=\frac{2f''}{1+f^2}-\frac{4f'^2 f}{(1+f^2)^2}
\end{align*}
as well as
\[ \sin^2 F=\frac{4f^2}{(1+f^2)^2},\qquad \sin(2 F)=\frac{4f(1-f^2)}{(1+f^2)^2}. \]
Consequently, Eq.~\eqref{eq:sky} is equivalent to
\begin{equation}
\label{eq:skyf}
f''+\frac{\mc W(f)'}{\mc W(f)}f'-\frac{2f'^2f}{1+f^2}-\frac{2f(1-f^2)}{\mc W(f)(1+f^2)}\left [
\frac{4f'^2}{(1+f^2)^2}+\frac{4f^2}{r^2(1+f^2)^2}+1\right ]=0
\end{equation}
where 
\[ \mc W(f)(r):=r^2+\frac{8f(r)^2}{[1+f(r)^2]^2}. \]
Eq.~\eqref{eq:skyf} may be slightly simplified to give
\begin{equation}
\label{eq:skyf2}
f''+\frac{2rf'}{\mc W(f)}-\frac{2f'^2f}{1+f^2}+\frac{2f(1-f^2)}{\mc W(f)(1+f^2)}\left [
\frac{4f'^2}{(1+f^2)^2}-\frac{4f^2}{r^2(1+f^2)^2}-1\right ]=0
\end{equation}
Next, we set
\[ f(r)=:r(1+r)g\left (\frac{r-1}{r+1}\right ). \]
This yields
\begin{align*}
f\left (\frac{1+x}{1-x}\right )&=2\frac{1+x}{(1-x)^2}g(x) \\
f'\left (\frac{1+x}{1-x}\right )&=(1+x)g'(x)+\frac{3+x}{1-x}g(x) \\
f''\left (\frac{1+x}{1-x}\right )&=\tfrac12(1+x)(1-x)^2g''(x)+2(1-x)g'(x)+2g(x)
\end{align*}
for $x\in [-1,1)$.
We compactify the problem by allowing $x\in [-1,1]$.
In these new variables, Eq.~\eqref{eq:skyf} can be written as
\begin{equation}
\label{eq:skyg}
\mc R(g)(x):=g''(x)+\Phi\big (x,g(x),g'(x)\big )=0
\end{equation}
where $\Phi: (-1,1)\times \R^2\to \R$ is given by
\begin{align}
\label{def:Phi}
\Phi(x,y,z):=\frac{1}{\Psi(x,y)}\sum_{k=0}^2\Phi_k(x,y)z^k
\end{align}
with
\begin{align}
\label{def:Phik}
\Phi_0(x,y)&:=2^{-5}(1+x)^5(3+x)y^7
-2^{-6}(1+x)(1-x)^3(33-58x-16x^2+18x^3+7x^4)y^5 \nonumber \\
&\qquad +2^{-9}(1-x)^7(47-51x+33x^2+3x^3)y^3
+2^{-9}(1-x)^{11}y \nonumber \\
\Phi_1(x,y)&:=-2^{-4}(1+x)^7y^6-2^{-5}(1+x)^2(1-x)^4(14-21x+4x^2+7x^3)y^4 \nonumber \\
&\qquad +2^{-8}(1-x)^8(23-31x+13x^2+3x^3)y^2
+2^{-9}(1-x)^{12} \nonumber \\
\Phi_2(x,y)&:=-(1-x^2)\big [2^{-5}(1+x)^6y^5
+2^{-6}(1+x)^2(1-x)^4(7-10x+7x^2)y^3 \nonumber \\
&\qquad -2^{-9}(1-x)^8(3-10x+3x^2)y \big ]
\end{align}
and
\begin{align}
\label{def:Psi}
 \Psi(x,y):=&(1-x^2)\big [2^{-6}(1+x)^6y^6
+2^{-8}(1+x)^2(1-x)^4(11-10x+11x^2)y^4 \nonumber \\
&+2^{-10}(1-x)^8(11-10x+11x^2)y^2
+2^{-12}(1-x)^{12}\big ] .
\end{align}
Obviously, $\Psi(-1,y)=\Psi(1,y)=0$ for all $y$
and, since
\begin{align}
\label{eq:Psi1boundary}
 \sum_{k=0}^2 \Phi_k(-1,y)z^k&=4(1+8y^2)(y+2z) \nonumber \\
  \sum_{k=0}^2 \Phi_k(1,y)z^k&=4y^6(y-2z), 
 \end{align}
we obtain the 
 regularity conditions
 \begin{equation}
 \label{eq:RegRem}
  g'(-1)=-\tfrac12 g(-1),\qquad g'(1)=\tfrac12 g(1) 
  \end{equation}
for solutions of $\mc R(g)=0$ (at least if $g(1)\not=0$, which is the case we
are interested in).

\section{Numerical approximation of the Skyrmion}

\subsection{Description of the numerical method}
\label{sec:num}
We will require a fairly precise approximation to the Skyrmion.
Already from a numerical point of view this is not entirely trivial since a brute force
approach is doomed to fail.
That is why we
employ a more sophisticated Chebyshev pseudospectral method.
To this end, we use the basis functions $\phi_n: [-1,1]\to \R$, $n\in \N_0$, 
given by
\begin{equation}
\label{def:phi}
 \phi_n(x):=T_n(x)+a_n(1+x)+b_n(1-x),
 \end{equation}
where $T_n$ are the standard Chebyshev polynomials.
The constants $a_n$ and $b_n$ are chosen in such a way that the regularity conditions Eq.~\eqref{eq:RegRem}
are satisfied, i.e., we require
\begin{equation}
\label{eq:Regphi}
 \phi_n'(-1)+\tfrac12 \phi_n(-1)=\phi_n'(1)-\tfrac12\phi_n(1)=0 
 \end{equation}
for all $n\in \N_0$.
This yields $\phi_0=\phi_1=0$ and
\begin{align*}
 a_n&=-T_n'(-1)-\tfrac12 T_n(-1)=(-1)^{n}(n^2-\tfrac12) \\
 b_n&=T_n'(1)-\tfrac12 T_n(1)=n^2-\tfrac12
\end{align*}
for $n\geq 2$.
Then we numerically solve the ($N_0-1$)-dimensional nonlinear root finding problem
\[ \mc R\left (\sum_{n=2}^{N_0}\tilde c_n\phi_n\right )(x_k)=0, \quad x_k=\cos\left (\frac{k\pi}{N_0}\right ),
\quad k=1,2,\dots,N_0-1 \]
for $N_0=43$ with $\mc R$ given in Eq.~\eqref{eq:skyg}.
The points $(x_k)_{k=1}^{N_0-1}$ are the standard Gau\ss-Lobatto collocation points 
for the Chebyshev pseudospectral method \cite{Boy01} with endpoints removed (we only have $N_0-1$ unknown
coefficients due to $\phi_0=\phi_1=0$; in the standard Chebyshev method one has $N_0+1$ coefficients
to determine).
Finally, we rationalize the numerically obtained coefficients $(\tilde c_n)$.
The $42$ coefficients $(c_n)_{n=2}^{43}\subset \Q$ obtained in this way are listed in Table \ref{tab:c}.

\subsection{Methods for rigorous estimates}

In order to obtain good estimates for the complicated rational functions that will show up in the sequel,
the following elementary observation is useful.

\begin{lemma}
\label{lem:estf}
Let $f\in C^1([-1,1])$ and set
\[ \Omega_N:=\{-1+\tfrac{2k}{N}: k=0,1,2,\dots,N\}\subset [-1,1]\cap \Q,\qquad N\in \N . \]
Then we have the bounds
\begin{align*}
\max_{[-1,1]}f&\leq \max_{\Omega_N}f+\tfrac{2}{N}\|f'\|_{L^\infty} \\
\min_{[-1,1]}f&\geq \min_{\Omega_N}f-\tfrac{2}{N}\|f'\|_{L^\infty} \\
\|f\|_{L^\infty}&\leq \max_{\Omega_N}|f|+\tfrac{2}{N}\|f'\|_{L^\infty}
\end{align*}
for any $N\in \N$.
\end{lemma}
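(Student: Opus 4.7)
The statement is a straightforward mean-value estimate, so my plan is to pass from any $x \in [-1,1]$ to a nearest mesh point in $\Omega_N$ and bound the difference using $\|f'\|_{L^\infty}$. Concretely, I would fix $x \in [-1,1]$ and select $x_k = -1 + 2k/N \in \Omega_N$ minimizing $|x - x_k|$; since the nodes partition $[-1,1]$ into subintervals of length $2/N$, one certainly has $|x - x_k| \leq 2/N$ (in fact $\leq 1/N$, but the cruder bound already suffices for the stated inequalities). The fundamental theorem of calculus then gives
\[ f(x) - f(x_k) = \int_{x_k}^x f'(t)\,\d t, \]
so that $|f(x) - f(x_k)| \leq |x-x_k|\cdot\|f'\|_{L^\infty} \leq \tfrac{2}{N}\|f'\|_{L^\infty}$.

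All three inequalities follow immediately from this discrepancy bound. For the first, $f(x) \leq f(x_k) + \tfrac{2}{N}\|f'\|_{L^\infty} \leq \max_{\Omega_N} f + \tfrac{2}{N}\|f'\|_{L^\infty}$, and taking the supremum over $x \in [-1,1]$ delivers the stated maximum bound. Applying the same reasoning to $-f$ (whose derivative has the same $L^\infty$ norm) yields the minimum bound. For the third, the triangle inequality gives $|f(x)| \leq |f(x_k)| + \tfrac{2}{N}\|f'\|_{L^\infty} \leq \max_{\Omega_N} |f| + \tfrac{2}{N}\|f'\|_{L^\infty}$, and one again takes the supremum over $x$.

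There is essentially no obstacle here: the lemma amounts to the observation that $\Omega_N$ is a $(2/N)$-net in $[-1,1]$ combined with the Lipschitz estimate $|f(x) - f(y)| \leq |x-y|\,\|f'\|_{L^\infty}$. The point of isolating it as a lemma is purely utilitarian: it reduces bounding a complicated $C^1$ function on $[-1,1]$ to evaluating it at finitely many rational nodes (where error-free arithmetic in $\Q$ is available) plus a crude a priori bound on its derivative, which is exactly the scheme needed for the rigorous verification of $\|\mc R(\gT)\|_{L^\infty} \leq 1/500$ and the subsequent estimates.
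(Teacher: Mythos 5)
Your proof is correct and matches the paper's (one-line) argument: the paper simply cites the mean value theorem, which is precisely the Lipschitz bound $|f(x)-f(x_k)|\le |x-x_k|\,\|f'\|_{L^\infty}$ applied to the nearest mesh point that you spell out. Nothing further to add.
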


\begin{proof}
The statements are simple consequences of the mean value theorem.
\end{proof}

\begin{remark}
In a typical application one first obtains a rigorous but crude bound on $f'$ by elementary estimates. Then one uses a computer to evaluate $f$ sufficiently many times in order to obtain a good bound on $f$. 
\end{remark}

Another powerful method for estimating complicated functions is provided by interval arithmetic \cite{AleMay00, HicJuEmd01}. 
We use the following elementary rules for operations involving intervals.
\begin{definition}
Let $a,b,c,d\in \R$ with $a\leq b$ and $c\leq d$. \emph{Interval arithmetic} is defined by the following operations.
\begin{align*}
[a,b]+[c,d]&:=[a+c, b+d] \\
[a,b]-[c,d]&:=[a-d, b-c] \\
[a,b]\cdot [c,d]&:=[\min\{ac, ad, bc, bd\}, \max\{ac, ad, bc, bd\}] \\
\frac{[a,b]}{[c,d]}&:=[a,b]\cdot [\tfrac{1}{d}, \tfrac{1}{c}]\quad \mbox{provided }0\notin [c,d].
\end{align*}
If $a,b,c,d\in \Q$, we speak of \emph{rational interval arithmetic}. 
Furthermore, standard (rational) arithmetic is embedded by identifying $a\in \R$ with $[a,a]$. 
\end{definition}

\begin{lemma}
\label{lem:interval}
Let $x\in [a,b]$ and $y\in [c,d]$ and denote by $*$ any of the elementary operations $+, -, \cdot, /$. Then we have $x*y\in [a,b]*[c,d]$.
\end{lemma}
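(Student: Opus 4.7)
The plan is to verify the claim separately for each of the four operations $+,-,\cdot,/$, reducing everything to monotonicity arguments on compact intervals.

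For addition and subtraction the arguments are one-liners. If $x\in[a,b]$ and $y\in[c,d]$, then adding the inequalities $a\le x\le b$ and $c\le y\le d$ gives $a+c\le x+y\le b+d$, which is precisely the definition of $[a,b]+[c,d]$. For subtraction, observe that $y\in[c,d]$ is equivalent to $-y\in[-d,-c]$, whence $a-d\le x-y\le b-c$ as required.

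The multiplication case is the only one with any content. I would use that the map $(x,y)\mapsto xy$ is continuous on the compact rectangle $[a,b]\times[c,d]$ and, crucially, \emph{separately linear} in each variable. For fixed $y$, the expression $xy$ is an affine function of $x$ on $[a,b]$ and therefore attains its extremes at $x=a$ or $x=b$; similarly for fixed $x$. A standard corner argument then shows that the global maximum and minimum of $xy$ over the rectangle are attained among the four corner values $ac,ad,bc,bd$. By the intermediate value theorem, $xy$ lies in the interval between the minimum and maximum of these four numbers, which matches the definition of $[a,b]\cdot[c,d]$.

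For division, I would simply reduce to multiplication by writing $x/y=x\cdot(1/y)$. The assumption $0\notin[c,d]$ means either $0<c\le d$ or $c\le d<0$; in either case the map $t\mapsto 1/t$ is continuous and strictly decreasing on $[c,d]$, so $y\in[c,d]$ implies $1/y\in[1/d,1/c]$. Applying the already-established multiplication rule to $x\in[a,b]$ and $1/y\in[1/d,1/c]$ gives $x/y\in[a,b]\cdot[1/d,1/c]$, which is exactly the definition of $[a,b]/[c,d]$. No step poses any real obstacle; the only point requiring a moment of care is the multiplication case, and even there the bilinearity immediately reduces the extremal analysis to the four corners.
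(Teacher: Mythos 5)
Your proof is correct. The paper itself declines to give a proof, stating only that it is ``an elementary exercise,'' and your argument supplies exactly the standard one: addition and subtraction by monotone rearrangement of inequalities, multiplication by observing that the bilinear map $(x,y)\mapsto xy$ attains its extrema over the rectangle $[a,b]\times[c,d]$ at the four corners, and division by reducing to multiplication via the monotonicity of $t\mapsto 1/t$ on an interval not containing $0$.
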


\begin{proof}
The proof is an elementary exercise.
\end{proof}

\begin{remark}
\label{rem:interval}
If $f$ is a complicated rational function of several variables (with rational coefficients), rational interval arithmetic is an effective way to obtain a rigorous and reasonable bound on $f(\Omega)$, provided $\Omega$ is a product of closed intervals with rational endpoints.
The necessary computations can easily be carried out on a computer as they only involve elementary operations in $\Q$.
The quality of the bound, however, depends on the particular algebraic form that is used to represent $f$.
Furthermore, in typical applications the bound can be improved considerably by splitting the domain $\Omega$ in smaller subdomains $\Omega_k$, i.e., $\Omega=\bigcup_k \Omega_k$, and by estimating each $f(\Omega_k)$ separately by interval arithmetic.
\end{remark}

\subsection{Rigorous bounds on the approximate Skyrmion}

\begin{definition}
We set
\[ \gT(x):=\sum_{k=2}^{43}c_n\phi_n(x) \]
where $(c_n)_{n=2}^{43}\subset \Q$ are given in Table \ref{tab:c}.
\end{definition}

\begin{proposition}
\label{prop:Rem}
The function $\gT$ satisfies
\begin{align*} \tfrac{1}{100}+\tfrac{11}{20}&\leq \gT(x)
\leq \tfrac{21}{20}-\tfrac{1}{100} \\
\tfrac{1}{100}-\tfrac{11}{20}&\leq \gT'(x)\leq \tfrac{1}{2}-\tfrac{1}{100}
\end{align*}
for all $x\in [-1,1]$.
Furthermore,
\[ \|\mc R(\gT)\|_{L^\infty}\leq \tfrac{1}{500}. \]
\end{proposition}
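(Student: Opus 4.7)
The plan is to reduce all three bounds to finitely many exact rational computations, using Lemma \ref{lem:estf} for the two pointwise bounds and rational interval arithmetic (Lemma \ref{lem:interval}) for the bound on $\mc R(\gT)$.

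Since $\gT=\sum_{n=2}^{43}c_n\phi_n$ is an explicit polynomial with rational coefficients, every evaluation of $\gT$, $\gT'$, or $\gT''$ at a rational point is an exact computation in $\Q$. To apply Lemma \ref{lem:estf} I would first obtain crude but rigorous a~priori bounds on $\|\gT'\|_{L^\infty}$ and $\|\gT''\|_{L^\infty}$ by summing $|c_n|$ times the elementary estimates $\|\phi_n\|_{L^\infty}\leq 4n^2-1$, $\|\phi_n'\|_{L^\infty}\leq 3n^2-1$, $\|\phi_n''\|_{L^\infty}\leq \tfrac13 n^2(n^2-1)$, which in turn follow from $\|T_n\|_{L^\infty}=1$ and Markov's inequality (the linear correction $a_n(1+x)+b_n(1-x)$ contributes nothing to the second derivative). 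Choosing $N$ large enough that the error terms $\tfrac{2}{N}\|\gT'\|_{L^\infty}$ and $\tfrac{2}{N}\|\gT''\|_{L^\infty}$ are far smaller than the slack present in the assertions, one then evaluates $\gT$ and $\gT'$ on the finite grid $\Omega_N$ by exact rational arithmetic and verifies the inequalities by direct comparison. Since $\gT$ was engineered to be close to the (bounded and slowly varying) Skyrmion in the rescaled coordinate, the windows in the statement are comfortably wide and this step is essentially computational.

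For the bound $\|\mc R(\gT)\|_{L^\infty}\leq \tfrac{1}{500}$ I would split $[-1,1]$ into a large number of subintervals $[a_k,a_{k+1}]$ with rational endpoints. On each piece, treat the argument $x$ as the interval $[a_k,a_{k+1}]$ and apply Lemma \ref{lem:interval} to enclose $\gT(x)$, $\gT'(x)$, and $\gT''(x)$ in narrow rational intervals by evaluating the polynomials via Horner's scheme with interval input. Substituting these enclosures into the explicit algebraic form \eqref{def:Phi}--\eqref{def:Psi} of $\Phi$ yields an interval enclosure of $\Phi(x,\gT(x),\gT'(x))$, and adding the enclosure of $\gT''(x)$ gives an enclosure of $\mc R(\gT)(x)$ valid throughout $[a_k,a_{k+1}]$. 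The $L^\infty$ norm is then bounded by the worst of these finitely many enclosures, and the mesh is refined until the target $\tfrac{1}{500}$ is attained.

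The main obstacle is the behaviour at $x=\pm 1$: one has $\Psi(\pm 1,y)=0$ from \eqref{def:Psi}, so a naive interval evaluation of $\Phi$ on any subinterval touching $\pm 1$ divides by an interval containing $0$. The regularity identities \eqref{eq:RegRem}, which are built into the basis via \eqref{eq:Regphi} and therefore automatically satisfied by $\gT$, guarantee through \eqref{eq:Psi1boundary} that $\mc R(\gT)$ extends continuously to $\pm 1$. To exploit this inside interval arithmetic I would factor $\Psi(x,y)=(1-x^2)\tilde\Psi(x,y)$ directly from \eqref{def:Psi}, and then extract the matching $(1-x^2)$ factor from the numerator $\sum_{k=0}^2 \Phi_k(x,\gT(x))\gT'(x)^k$ by exact polynomial long division in $x$. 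This division succeeds because \eqref{eq:Psi1boundary} together with $\gT'(-1)=-\tfrac12\gT(-1)$ and $\gT'(1)=\tfrac12\gT(1)$ forces the numerator, viewed as a polynomial in $x$, to vanish at $x=\pm 1$. The resulting quotient is regular on all of $[-1,1]$, and interval arithmetic applied to two small boundary subintervals $[-1,-1+\eta]$ and $[1-\eta,1]$, combined with the interior mesh estimate, then closes the argument.
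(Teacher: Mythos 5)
Your bounds on $\gT$ and $\gT'$ follow the paper exactly: a Markov-type bound on $\|\gT''\|_{L^\infty}$ (via $\|T_n''\|_{L^\infty}\leq\frac13 n^2(n^2-1)$, and indeed $\phi_n''=T_n''$), bootstrapped through Lemma~\ref{lem:estf}. The crude $\phi_n$, $\phi_n'$ bounds you list are correct but unnecessary, since the paper obtains the $\gT'$ bound from the $\gT''$ bound and then the $\gT$ bound from the $\gT'$ bound.

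For $\|\mc R(\gT)\|_{L^\infty}\leq\frac{1}{500}$ you take a genuinely different route. The paper writes $\Phi(x,\gT,\gT')=P/Q$ with $P,Q\in\Q[x]$ (exploiting the same boundary cancellation you identify, and inserting an empirical factor $(\frac{21}{10}+\frac13 x-x^2)^7$), then ``straightens'' the denominator by multiplying through with a truncated Chebyshev approximation $R\approx 1/Q$, and re-expands $\hat P:=RQ\gT''+RP$ and $\hat Q:=RQ$ in Chebyshev polynomials. The decisive gain is that the re-expansion \emph{discovers} the cancellation between $\gT''$ and $\Phi(\cdot,\gT,\gT')$ algebraically: the coefficients of the polynomial $\hat P$ come out small, giving $\|\hat P\|_{L^\infty}\leq\sum_n|\hat p_n|\leq\frac{12}{10000}$, while $\min\hat Q\geq\frac45$ via Lemma~\ref{lem:estf}, and the ratio is $\leq\frac{3}{2000}$ with a comfortable margin. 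Your mesh-plus-interval-arithmetic plan is sound in principle and your handling of the boundary (dividing $(1-x^2)$ out of both $\Psi$ and the numerator, which is justified by \eqref{eq:Regphi}, \eqref{eq:Psi1boundary}) is correct, but two refinements are needed to make it practical. First, adding the interval enclosures of $\gT''$ and $\Phi$ separately forces each summand's width below $\sim 10^{-3}$ even though both have magnitude of order $10$; you should instead form a single rational function with polynomial numerator $\gT''(x)\hat\Psi(x,\gT(x))+\hat N(x)$, so the cancellation happens at the level of exact $\Q$-arithmetic rather than being re-discovered numerically on every subinterval. Second, Horner on the monomial form is the wrong evaluation scheme here: the monomial coefficients of a degree-$43$ Chebyshev combination are of size $\sim 2^{43}$, and the resulting interval overshoot is catastrophic regardless of mesh width. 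Use Clenshaw's recurrence in the Chebyshev basis, or better, the Lemma~\ref{lem:estf}-type strategy of evaluating exactly at one rational point per subinterval and absorbing the variation via a derivative bound. With those two changes your plan closes; as written it would be far more computationally expensive than the paper's approach and the Horner step would likely not terminate in any reasonable mesh.
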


\begin{proof}
From the bound $\|T_n''\|_{L^\infty}\leq \tfrac13 n^2(n^2-1)$ we infer
\[ \|\gT''\|_{L^\infty}\leq \sum_{n=2}^{43}|c_n|\|T_n''\|_{L^\infty}\leq 
\tfrac13\sum_{n=2}^{43}n^2(n^2-1)|c_n|\leq 36  \]
and Lemma \ref{lem:estf} with $N=7200$ yields
\begin{align*} \max_{[-1,1]} \gT'&\leq \max_{\Omega_N} \gT'+\tfrac{2}{N}\|\gT''\|_{L^\infty}
\leq \tfrac{47}{100}+\tfrac{1}{100}\leq \tfrac{1}{2}-\tfrac{1}{100}  \\
\min_{[-1,1]}\gT'&\geq \min_{\Omega_N}\gT'-\tfrac{2}{N}\|\gT''\|_{L^\infty}
\geq -\tfrac{51}{100}-\tfrac{1}{100}\geq -\tfrac{11}{20}+\tfrac{1}{100}.
\end{align*}
In particular, we obtain $\|\gT'\|_{L^\infty}\leq 1$ and
with $N=200$ we find
\begin{align*} \max_{[-1,1]} \gT&\leq \max_{\Omega_N} \gT+\tfrac{2}{N}\|\gT'\|_{L^\infty}
\leq \tfrac{101}{100}+\tfrac{1}{100}\leq \tfrac{21}{20}-\tfrac{1}{100}  \\
\min_{[-1,1]}\gT&\geq \min_{\Omega_N}\gT-\tfrac{2}{N}\|\gT'\|_{L^\infty}
\geq \tfrac{58}{100}-\tfrac{1}{100}\geq \tfrac{11}{20}+\tfrac{1}{100}.
\end{align*}
This proves the first part of the Proposition.

Next, we consider
\[ \hat\Psi(x,y):=\frac{\Psi(x,y)}{1-x^2}. \]
Rational interval arithmetic yields 
\[ \hat\Psi\left ([-1,0],[\tfrac{11}{20},\tfrac{21}{20}]\right )\subset \left [10^{-3},13\right ],\qquad
\hat\Psi\left ([0,1],[\tfrac{11}{20},\tfrac{21}{20}]\right )\subset \left [10^{-4},2\right]
 \]
 and thus, $\hat \Psi(x,\gT(x))>0$ for all $x\in [-1,1]$. 
We set
\begin{align*} P(x)&:=\frac{(\frac{21}{10}+\frac13 x-x^2)^7 }{1-x^2}\sum_{k=0}^2
\Phi_k\left (x,\gT(x)\right )[\gT'(x)]^k \\
\qquad Q(x)&:=\frac{(\tfrac{21}{10}+\tfrac13 x-x^2)^7}{1-x^2}\Psi\left (x,\gT(x)\right )=(\tfrac{21}{10}+\tfrac13 x-x^2)^7\hat\Psi(x,\gT(x)), 
\end{align*}
which yields the representation
\[ \Phi\left (x,\gT(x),\gT'(x)\right )=\frac{P(x)}{Q(x)}. \]
The prefactor $(\frac{21}{10}+\frac13 x-x^2)^7$ is introduced 
\emph{ad hoc}. It is empirically found to improve some of the estimates that follow.
By Eq.~\eqref{def:Psi}, $Q$ is a polynomial with rational coefficients and by the regularity
conditions Eq.~\eqref{eq:Regphi} together with Eq.~\eqref{eq:Psi1boundary}, the same is true
for $P$.
Furthermore, $Q(x)>0$ for all $x\in [-1,1]$
and from the explicit expressions for $\Phi_k$ and $\Psi$, 
Eqs.~\eqref{def:Phik} and \eqref{def:Psi}, we read
off the estimates $\deg P\leq 319$ and $\deg Q\leq 278$.

For the following it is advantageous to straighten the denominator.
To this end we obtain a truncated Chebyshev expansion of $1/Q$,
\[ \frac{1}{Q(x)}\approx \sum_{n=0}^{14} r_n T_n(x)=:R(x), \]
where 
\[ (r_n)=(\tfrac{11}{37},-\tfrac{1}{23}, -\tfrac{5}{44}, -\tfrac{3}{13},
\tfrac{9}{44}, \tfrac{1}{12}, -\tfrac{1}{766}, -\tfrac{3}{25}, \tfrac{1}{101},
\tfrac{1}{23}, \tfrac{1}{35}, -\tfrac{1}{36}, -\tfrac{1}{66}, \tfrac{1}{307}, \tfrac{1}{125}). \]
The coefficients $(r_n)$ can be obtained numerically by a standard pseudospectral method as explained in Section \ref{sec:num}.
Thus, we may write
\begin{align*}
 \mc R(\gT)(x)&=\gT''(x)+\Phi\left (x,\gT(x),\gT'(x)\right )=\gT''(x)+\frac{P(x)}{Q(x)} \\
 &=\frac{R(x)Q(x)\gT''(x)+R(x)P(x)}{R(x)Q(x)}
 \end{align*}
and this modification is expected to improve the situation since the denominator $RQ$ is now
approximately constant. Note further that $RP$ and $RQ$ are polynomials with rational
coefficients and 
\[ \deg(RP)\leq 333,\quad \deg (RQ)\leq 292,\quad \deg(RQ\gT'')\leq 333. \]
For brevity we set
\[ \hat P:=RQ\gT''+RP,\qquad \hat Q:=RQ. \]

We now re-expand $\hat P$ and $\hat Q$ as
\[ \hat P(x)=\sum_{n=0}^{333}\hat p_nT_n(x),
\qquad \hat Q(x)=\sum_{n=0}^{292}\hat q_nT_n(x). \]
The expansion coefficients $(\hat p_n),(\hat q_n)\subset \Q$ are obtained
by solving the linear equations\footnote{The choice of the evaluation points $(x_k)$ is arbitrary but since $\hat P$ has
removable singularities at $-1$ and $1$, we prefer to avoid the endpoints.
Furthermore, the equation for $(\hat q_n)$ is overdetermined so that one can re-use the computationally expensive LU decomposition.}
\begin{align*} 
\sum_{n=0}^{333}\hat p_nT_n(x_k)=\hat P(x_k),\qquad 
\sum_{n=0}^{333}\hat q_nT_n(x_k)=\hat Q(x_k),\qquad 
x_k=-\tfrac12+\tfrac{k}{333}
\end{align*}
for $k=0,1,\dots,333$.
From the bounds $\|T_n\|_{L^\infty}\leq 1$ and $\|T_n'\|_{L^\infty}\leq n^2$ we infer
\begin{align*}
\|\hat P\|_{L^\infty}\leq \sum_{n=0}^{333}|\hat p_n|
\leq \tfrac{12}{10000}, \qquad 
\|\hat Q'\|_{L^\infty}\leq\sum_{n=0}^{292}n^2|\hat q_n|\leq 22.
\end{align*}
Consequently, Lemma \ref{lem:estf} with $N=500$ yields
\begin{align*} 
\min_{[-1,1]} \hat Q\geq \min_{\Omega_N}\hat Q-\tfrac{2}{N}\|\hat Q'\|_{L^\infty}\geq \tfrac{93}{100}-\tfrac{44}{500}\geq \tfrac{4}{5}
\end{align*}
and, since $\mc R(\gT)=\hat P/\hat Q$, we obtain the estimate
\[ \|\mc R(\gT)\|_{L^\infty}\leq 
\frac{\|\hat P\|_{L^\infty}}{\min_{[-1,1]}\hat Q}\leq \tfrac54\tfrac{12}{10000}=\tfrac{3}{2000}\leq \tfrac{4}{2000}=\tfrac{1}{500}. \]
\end{proof}

\section{Estimates for the nonlinearity}

\noindent By employing rational interval arithmetic, we prove bounds on second derivatives of the function $\Phi$.
This leads to explicit bounds for the nonlinear operator.

All of the polynomials of two variables $x, y$ that appear in the sequel are implicitly assumed to be given in the following \emph{canonical form}
\[ \sum_{k=0}^{k_0}(1+x)^{\alpha_k}(1-x)^{\beta_k}P_k(x)y^k \]
where $k_0,\alpha_k,\beta_k \in \N_0$ and $P_k$ are polynomials with rational coefficients and $P_k(\pm 1)\not= 0$.
This is important since the outcome of interval arithmetic depends on the representation of the function.

\subsection{Pointwise estimates}

\begin{lemma} 
\label{lem:pwN}
Let $\Omega=[-1,1]\times [\tfrac{11}{20},\tfrac{21}{20}]\times [-\tfrac{11}{20},\tfrac12]$.
Then we have the bounds
\begin{align*}
\|\partial_2^2 \Phi\|_{L^\infty(\Omega)}&\leq 70 \\
\|\partial_2\partial_3 \Phi\|_{L^\infty(\Omega)}&\leq 22 \\
\|\partial_3^2 \Phi\|_{L^\infty(\Omega)}&\leq 8. 
\end{align*}
\end{lemma}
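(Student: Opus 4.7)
The plan is to compute each of the three second derivatives as an explicit rational function of $(x,y,z)$ with rational coefficients, reduce it to the canonical form prescribed at the start of this section, and then bound it rigorously by rational interval arithmetic on a sufficiently fine partition of $[-1,1]$. Starting from
\[ \Phi(x,y,z)=\frac{\Phi_0(x,y)+\Phi_1(x,y)z+\Phi_2(x,y)z^2}{\Psi(x,y)}, \]
I would first obtain $\partial_3^2\Phi=2\Phi_2/\Psi$ directly; since $\Phi_2$ carries an explicit $(1-x^2)$ factor matching the one in $\Psi=(1-x^2)\hat\Psi$, this quotient simplifies at once to a polynomial over $\hat\Psi$. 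The mixed derivative $\partial_2\partial_3\Phi$ and the pure derivative $\partial_2^2\Phi$ are produced by the quotient rule and have nominal denominators $\Psi^2$ and $\Psi^3$, respectively; I would expand the numerators and cancel every common $(1+x)^{\alpha}(1-x)^{\beta}$ factor against the corresponding factor of $(1-x^2)$ in the denominator, before re-expressing the result in the form $\sum_k(1+x)^{\alpha_k}(1-x)^{\beta_k}P_k(x)y^k$ over a power of $\hat\Psi$.

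Next, I would partition $[-1,1]$ into $n$ equal closed sub-intervals $I_j$ and, on each sub-box $I_j\times[\tfrac{11}{20},\tfrac{21}{20}]\times[-\tfrac{11}{20},\tfrac12]$, apply Lemma~\ref{lem:interval} and the rules of rational interval arithmetic, supplemented by Lemma~\ref{lem:estf} if a sharper one-variable enclosure is needed, to produce a rigorous rational upper bound on $|N_{ij}|$ and a rigorous rational lower bound on the simplified denominator. A uniform positive lower bound for $\hat\Psi$ on the full box $\Omega$ is available by exactly the interval-arithmetic argument already carried out in the proof of Proposition~\ref{prop:Rem} (there applied along the trajectory $y=\gT(x)$; one simply lets $y$ range over $[\tfrac{11}{20},\tfrac{21}{20}]$ instead). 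Dividing the enclosure of the numerator by the lower bound of the denominator gives a bound on the corresponding derivative on each sub-box; the maximum over $j$ is the desired bound on its $L^\infty(\Omega)$ norm. Guided by Remark~\ref{rem:interval}, one chooses $n$ empirically---several hundred sub-intervals should suffice---so that the resulting bounds are no larger than $70$, $22$, and $8$, respectively.

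The main obstacle is the algebraic bookkeeping required at the differentiation step. Because the formal denominators $\Psi^m$ vanish at $x=\pm1$, every $(1\pm x)$ factor produced in the numerator by the quotient rule has to be identified and cancelled explicitly before interval arithmetic is applied; otherwise the enclosures near the endpoints would be uselessly wide. The canonical-form convention of factoring $(1+x)^{\alpha}(1-x)^{\beta}$ out of each polynomial is precisely what makes this cancellation routine, provided one is careful to re-express the quotient-rule output in that form at each stage. A secondary and purely computational difficulty is that interval arithmetic on bivariate polynomials of high degree is intrinsically lossy, so the partition size $n$ must be chosen large enough to compensate; all computations remain in $\Q$, and the final bounds are therefore rigorous.
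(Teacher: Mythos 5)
Your proposal follows essentially the same route as the paper: write each second derivative as a rational function with denominator a power of $\hat\Psi=\Psi/(1-x^2)$, explicitly carry out the $(1\pm x)$-cancellations in the numerators so that everything is regular on $[-1,1]$, put the resulting polynomials in the prescribed canonical form, and then bound by rational interval arithmetic on a subdivided domain. The paper does exactly this (it writes out the cancelled numerator for the $\partial_2\partial_3$ term explicitly and invokes the same kind of cancellation for $\partial_2^2$); the only cosmetic difference is that the paper's footnote speaks of subdividing the full box $\Omega$, whereas you subdivide only the $x$-interval, which is an implementation detail rather than a different method.
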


\begin{proof}
We begin with the simplest estimate, that is, the bound on $\partial_3^2\Phi$.
We set
\[ \hat \Phi_k(x,y):=\frac{\Phi_k(x,y)}{1-x^2},\qquad \hat\Psi(x,y):=\frac{\Psi(x,y)}{1-x^2} \]
with $\Phi_k$ and $\Psi$ from Eqs.~\eqref{def:Phik} and \eqref{def:Psi}, respectively.
Observe that $\hat\Phi_2$ is a polynomial.
From Eq.~\eqref{def:Phi} we infer
\[ \partial_z^2 \Phi(x,y,z)=\frac{2\Phi_2(x,y)}{\Psi(x,y)}
=\frac{2\hat\Phi_2(x,y)}{\hat\Psi(x,y)} \]
and from the proof of Proposition \ref{prop:Rem} we recall
that $\hat\Psi([-1,1],[\frac{11}{20},\frac{21}{20}])\subset [10^{-4},13]$.
Consequently, $\partial_3^2 \Phi$ is a rational function without poles in $\Omega$.
Rational interval arithmetic then yields\footnote{Here and in the following, the domain $\Omega$ needs to be divided in sufficiently small subdomains $\Omega_k\subset \Omega$ such that $\Omega=\bigcup_k \Omega_k$, see Remark \ref{rem:interval}.}
$\partial_3^2\Phi(\Omega) \subset [-8,8]$
and this proves the stated bound for $\partial_3^2\Phi$.

Next, we consider $\partial_2\partial_3\Phi$.
We have
\begin{align*} \partial_y\partial_z\Phi(x,y,z)&=\partial_y \frac{\hat \Phi_1(x,y)+2\hat \Phi_2(x,y)z}{\hat \Psi(x,y)} \\
&=\frac{\hat\Psi(x,y)\partial_y\hat\Phi_1(x,y)
-\partial_y\hat\Psi(x,y)\hat\Phi_1(x,y)}{\hat\Psi(x,y)^2} \\
&\quad+2z\frac{\hat\Psi(x,y)\partial_y\hat\Phi_2(x,y)
-\partial_y\hat\Psi(x,y)\hat\Phi_2(x,y)}{\hat\Psi(x,y)^2} 
\end{align*}
and, since $\hat\Phi_2$ is a polynomial, the last term is a rational function without
poles in $\Omega$.
Note further that the numerator of the second to last term appears to be singular
at $x\in \{-1,1\}$, but in fact there is a cancellation so that
\begin{align*} \hat\Psi(x,y)&\partial_y\hat\Phi_1(x,y)-\partial_y\hat\Psi(x,y)\hat\Phi_1(x,y) \\
&=2^{-11}(1+x)^7(1-x)^3(17-43x+7x^2+3x^3)y^9 \\
&\quad -2^{-11}(1+x)^5(1-x)^7(17-15x+7x^2+7x^3)y^7 \\
&\quad -2^{-14}(1+x)(1-x)^{11}(285-637x+794x^2-386x^3+41x^4+95x^5)y^5 \\
&\quad -2^{-15}(1+x)(1-x)^{15}(25-31x+15x^2+7x^3)y^3 \\
&\quad +2^{-19}(1-x)^{19}(1-12x+3x^2)y.
\end{align*}
We conclude that $\partial_2\partial_3\Phi$ is a rational function
without poles in $\Omega$
and rational interval arithmetic yields
$\partial_2\partial_3\Phi(\Omega)\subset [-22, 22]$.

Finally, we turn to $\partial_2^2 \Phi$.
We have
\begin{align*} 
\partial_y \Phi(x,y,z)&=\sum_{k=0}^2 
\frac{\hat\Psi(x,y)\partial_y\hat\Phi_k(x,y)z^k
-\partial_y\hat\Psi(x,y)\hat\Phi_k(x,y)z^k}{\hat\Psi(x,y)^2} \\
&=\frac{1}{\hat\Psi(x,y)^2}\sum_{k=0}^2 \hat\Psi_k(x,y)z^k
\end{align*}
where $\hat\Psi_k:=\hat\Psi\partial_2\hat\Phi_k-\partial_2\hat\Psi\hat\Phi_k$.
From above we recall that $\hat\Psi_1$ and $\hat\Psi_2$ are polynomials.
We obtain
\begin{align*}
\partial_y^2 \Phi(x,y,z)=\sum_{k=0}^2 
\frac{\hat\Psi(x,y)^2\partial_y\hat\Psi_k(x,y)z^k
-2\hat\Psi(x,y)\partial_y\hat\Psi(x,y)\hat\Psi_k(x,y)z^k}{\hat\Psi(x,y)^4}.
\end{align*}
Again, the apparently singular term 
\[ \hat\Psi(x,y)^2\partial_y\hat\Psi_0(x,y)
-2\hat\Psi(x,y)\partial_y\hat\Psi(x,y)\hat\Psi_0(x,y) \]
is in fact a polynomial since it exhibits a special cancellation. 
Consequently, $\partial_2^2\Phi$ is a rational function
without poles in $\Omega$ and
rational interval arithmetic yields the desired bound.
\end{proof}

\subsection{The nonlinear operator}
In this section we employ Einstein's summation convention, i.e., we sum over repeated indices (the range follows from the context).

\begin{lemma}
\label{lem:Taylor}
Let $U\subset \R^d$ be open and convex and $f\in C^2(U)\cap W^{2,\infty}(U)$.
Set 
\[ M:=\tfrac12\left (\sum_{j=1}^d\sum_{k=1}^d \|\partial_j\partial_k f\|_{L^\infty(U)}^2 \right )^{1/2}. \]
Then we have 
\[ f(x_0+x)=f(x_0)+x^j\partial_j f(x_0)+N(x_0, x) \]
where $N$ satisfies the bound
\begin{align*}
 |N(x_0, x)-N(x_0, y)|&\leq M(|x|+|y|)|x-y|
\end{align*}
for all $x_0, x, y\in \R^d$ such that $x_0, x_0+x, x_0+y\in U$.
\end{lemma}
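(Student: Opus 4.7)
The plan is to avoid any use of third derivatives (which are unavailable under the $W^{2,\infty}$ hypothesis) by exploiting a double application of the fundamental theorem of calculus along cleverly chosen line segments. Since $N(x_0,z)$ is defined as the first-order Taylor remainder $f(x_0+z)-f(x_0)-z^j\partial_j f(x_0)$, the linear term in $z$ will conveniently cancel in difference formulas.

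First I would connect $y$ to $x$ by the segment $\sigma(s):=y+s(x-y)$, $s\in[0,1]$. Convexity of $U$ guarantees $x_0+\sigma(s)=(1-s)(x_0+y)+s(x_0+x)\in U$. Differentiating $s\mapsto N(x_0,\sigma(s))$ and observing that the terms involving $\partial_j f(x_0)$ cancel after integration,
\[
N(x_0,x)-N(x_0,y)=\int_0^1 (x-y)^j\bigl[\partial_j f(x_0+\sigma(s))-\partial_j f(x_0)\bigr]\,\d s.
\]
For the bracket I would apply the fundamental theorem of calculus a second time along the segment $t\mapsto x_0+t\sigma(s)$ (which lies in $U$ by convexity):
\[
\partial_j f(x_0+\sigma(s))-\partial_j f(x_0)=\int_0^1 \sigma(s)^k\,\partial_j\partial_k f(x_0+t\sigma(s))\,\d t.
\]
Substituting back produces a double integral in $(s,t)$ with integrand $(x-y)^j\sigma(s)^k\,\partial_j\partial_k f(x_0+t\sigma(s))$.

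To finish, I would apply the Cauchy--Schwarz inequality in the pair of indices $(j,k)$, which bounds the integrand pointwise by $|x-y|\,|\sigma(s)|\,\bigl(\sum_{j,k}\|\partial_j\partial_k f\|_{L^\infty(U)}^2\bigr)^{1/2}=2M\,|x-y|\,|\sigma(s)|$. The triangle inequality gives $|\sigma(s)|\le (1-s)|y|+s|x|$, and integrating in $s$ and $t$ yields
\[
|N(x_0,x)-N(x_0,y)|\le 2M|x-y|\int_0^1\bigl[(1-s)|y|+s|x|\bigr]\d s=M(|x|+|y|)|x-y|,
\]
as claimed.

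The only real obstacle is resisting the tempting single-stage formula $N(x_0,z)=\int_0^1(1-t)z^jz^k\partial_j\partial_k f(x_0+tz)\,\d t$: estimating $N(x_0,x)-N(x_0,y)$ directly from this representation requires controlling the difference $\partial_j\partial_k f(x_0+tx)-\partial_j\partial_k f(x_0+ty)$, which is not available since we assume only $f\in C^2\cap W^{2,\infty}$ and not $C^3$. The two-stage parametrization above circumvents this by keeping all second derivatives evaluated at a single point inside each integral.
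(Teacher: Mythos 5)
Your proof is correct and takes essentially the same route as the paper: both write $N(x_0,x)-N(x_0,y)$ as a double integral obtained by first applying the fundamental theorem of calculus along the segment from $y$ to $x$ and then radially from the origin, so that only second derivatives appear, and both finish with Cauchy--Schwarz. The only cosmetic difference is that you apply Cauchy--Schwarz once jointly over the index pair $(j,k)$ on the integrand, whereas the paper applies it sequentially in $j$ and then $k$ after performing the $s$-integration; the resulting constant is identical.
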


\begin{proof}
From the fundamental theorem of calculus we infer
\begin{align*}
 N(x_0,x)-N(x_0,y)&=f(x_0+x)-f(x_0+y)-(x^j-y^j)\partial_jf(x_0) \\
 &=\int_0^1 \partial_t f\big (x_0+y+t(x-y)\big )\d t-(x^j-y^j)\partial_jf(x_0) \\
 &=(x^j-y^j)\int_0^1 \big [\partial_j f\big (x_0+y+t(x-y)\big )-\partial_j f(x_0)\big ]\d t \\
 &=(x^j-y^j)\int_0^1\int_0^1 \partial_s \partial_j f\big (x_0+sy+st(x-y)\big )\d s \d t \\
 &=(x^j-y^j)\int_0^1[y^k+t(x^k-y^k)]\int_0^1 \partial_k\partial_j f\big (x_0+sy+st(x-y)\big )\d s \d t 
\end{align*}
and Cauchy-Schwarz yields
\begin{align*}
|N(x_0,x)-N(x_0,y)|&\leq |x^j-x^j|\|\partial_j\partial_k f\|_{L^\infty(U)}
\int_0^1 \big [t|x^k|+(1-t)|y^k|\big ]\d t \\
&=\tfrac12 |x^j-y^j|(|x^k|+|y^k|)\|\partial_j\partial_k f\|_{L^\infty(U)} \\
&\leq \tfrac12 |x-y|(|x^k|+|y^k|)\left (\sum_{j=1}^d \|\partial_j\partial_k f\|_{L^\infty(U)}^2\right )^{1/2} \\
&\leq M|x-y||x|+M|x-y||y|.
\end{align*}
\end{proof}

\begin{proposition}
\label{prop:CN}
We have
\[ \mc R(\gT+\delta)=\mc R(\gT)+\mc L\delta+\mc N(\delta) \]
where
\[ \mc Lu(x):=u''(x)+\partial_3\Phi\big(x,\gT(x),\gT'(x)\big )u'(x)+
\partial_2\Phi\big (x,\gT(x),\gT'(x)\big )u(x)
 \]
and $\mc N$ satisfies the bounds
\begin{align*}
\|\mc N(u)\|_{L^\infty}&\leq 39\,\|u\|_{W^{1,\infty}}^2 \\
\|\mc N(u)-\mc N(v)\|_{L^\infty}
&\leq 39\left (\|u\|_{W^{1,\infty}}+\|v\|_{W^{1,\infty}}\right )
\|u-v\|_{W^{1,\infty}}
\end{align*}
for all $u,v \in C^1[-1,1]$ with
$\|u\|_{W^{1,\infty}}, \|v\|_{W^{1,\infty}}\leq \frac{1}{100}$.
\end{proposition}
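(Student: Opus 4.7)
The plan is to apply Lemma \ref{lem:Taylor} pointwise in $x$, treating $x$ as a parameter. For each fixed $x \in [-1,1]$, define $\tilde\Phi_x(y,z) := \Phi(x,y,z)$ on an open convex neighborhood $U \subset \R^2$ of $[\tfrac{11}{20}, \tfrac{21}{20}] \times [-\tfrac{11}{20}, \tfrac12]$. By Proposition \ref{prop:Rem}, the point $(\gT(x), \gT'(x))$ lies strictly inside this box, with a buffer of exactly $\tfrac{1}{100}$ on each side. Hence for $\|u\|_{W^{1,\infty}}, \|v\|_{W^{1,\infty}} \leq \tfrac{1}{100}$, both $(\gT(x) + u(x), \gT'(x) + u'(x))$ and $(\gT(x) + v(x), \gT'(x) + v'(x))$ lie in $U$, so Lemma \ref{lem:Taylor} applies.

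The output is the identity
\[ \Phi\big(x, \gT(x)+u(x), \gT'(x)+u'(x)\big) = \Phi\big(x,\gT(x),\gT'(x)\big) + u(x)\,\partial_2\Phi + u'(x)\,\partial_3\Phi + N_x(u(x), u'(x)), \]
with $\partial_2\Phi, \partial_3\Phi$ evaluated at $(x,\gT(x),\gT'(x))$. Adding $\gT''(x) + u''(x)$ to both sides and subtracting $\mc R(\gT)(x) = \gT''(x) + \Phi(x,\gT(x),\gT'(x))$ exactly yields $\mc R(\gT+u) = \mc R(\gT) + \mc L u + \mc N(u)$, where $\mc N(u)(x) := N_x(u(x), u'(x))$, identifying $\mc L$ and $\mc N$ as claimed.

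It remains to compute the constant $M$ from Lemma \ref{lem:Taylor} (in dimension $d=2$, since $x$ is held fixed) and propagate it to the stated bounds. Using Lemma \ref{lem:pwN}:
\[ M \leq \tfrac12\sqrt{\|\partial_2^2\Phi\|_{L^\infty(\Omega)}^2 + 2\|\partial_2\partial_3\Phi\|_{L^\infty(\Omega)}^2 + \|\partial_3^2\Phi\|_{L^\infty(\Omega)}^2} \leq \tfrac12\sqrt{70^2 + 2\cdot 22^2 + 8^2} = \tfrac12\sqrt{5932} < 39. \]
Lemma \ref{lem:Taylor} then gives pointwise
\[ |\mc N(u)(x) - \mc N(v)(x)| \leq 39\left(\sqrt{u(x)^2 + u'(x)^2} + \sqrt{v(x)^2 + v'(x)^2}\right)\sqrt{(u-v)(x)^2 + (u-v)'(x)^2}, \]
which, upon taking $L^\infty$ in $x$ and recognizing the paper's convention $\|f\|_{W^{1,\infty}} = \sqrt{\|f'\|_{L^\infty}^2 + \|f\|_{L^\infty}^2}$ (which dominates the pointwise Euclidean norm), yields the Lipschitz bound. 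The quadratic bound follows by taking $v = 0$, noting that $N_x(0,0) = 0$ so that $\mc N(0) = 0$.

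The whole argument is essentially mechanical once Lemmas \ref{lem:pwN} and \ref{lem:Taylor} are in place; the only point requiring a touch of care is verifying that the $\tfrac{1}{100}$ margin in Proposition \ref{prop:Rem} matches the hypothesis $\|u\|_{W^{1,\infty}} \leq \tfrac{1}{100}$ so that the perturbed points remain inside $\Omega$, where the second-derivative bounds of Lemma \ref{lem:pwN} are valid.
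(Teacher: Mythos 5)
Your proposal is correct and follows essentially the same route as the paper: apply Lemma \ref{lem:Taylor} to $\Phi$ with $x$ held fixed (effectively $d=2$), use the $\tfrac{1}{100}$ margin from Proposition \ref{prop:Rem} to keep the perturbed arguments inside $\Omega$, compute $M = \tfrac12\sqrt{70^2+2\cdot 22^2+8^2} \le 39$ from Lemma \ref{lem:pwN}, and obtain the quadratic bound from the Lipschitz bound via $\mc N(0)=0$. The only cosmetic difference is that you explicitly spell out the pointwise-to-$L^\infty$ step and the $v=0$ specialization, which the paper leaves implicit.
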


\begin{proof}
Let $\Omega=[-1,1]\times [\tfrac{11}{20},\tfrac{21}{20}]\times [-\tfrac{11}{20},\tfrac12]$. Lemma \ref{lem:Taylor} implies
\[ \Phi(x,y_0+y,z_0+z)=\Phi(x, y_0, z_0)
+\partial_2 \Phi(x, y_0, z_0)y+\partial_3 \Phi(x, y_0, z_0)z
+N(x, y_0, z_0, y, z) \]
where $N$ satisfies the bound
\begin{align*} |N(x, y_0, z_0, y, z)-N(x,y_0,z_0,\tilde y,\tilde z)|&\leq 
M\sqrt{(y-\tilde y)^2+(z-\tilde z)^2}
\left (\sqrt{y^2+z^2}+\sqrt{\tilde y^2+\tilde z^2}\right )
\end{align*}
with
\[ M=\tfrac12\sqrt{\|\partial_2^2\Phi\|_{L^\infty(\Omega)}^2
+2\|\partial_2\partial_3 \Phi\|_{L^\infty(\Omega)}^2
+\|\partial_3^2\Phi\|_{L^\infty(\Omega)}^2}. \]
From Lemma \ref{lem:pwN} we infer $M\leq 39$ and thus,
the claim follows 
from Proposition \ref{prop:Rem} by
setting
\[ \mc N(u)(x):=N\big (x, \gT(x), \gT'(x), u(x), u'(x)\big ). \]
\end{proof}

\section{Analysis of the linear operator}

\noindent In this section we construct a linear operator $\tilde{\mc L}$ with an explicit fundamental system such that $\mc L-\tilde{\mc L}$ is small in $L^\infty(-1,1)$. Then we invert $\tilde{\mc L}$ and prove an explicit bound on the inverse. 

\subsection{Asymptotics}
First, we study the asymptotic behavior
of $\partial_2\Phi$ and $\partial_3 \Phi$.

\begin{lemma}
\label{lem:asymPhi}
We have
\begin{align*}
\partial_2\Phi\big (x,\gT(x),\gT'(x)\big )&=\frac{2}{1+x}+O(x^0) \\
\partial_3\Phi\big (x,\gT(x),\gT'(x)\big )&=\frac{4}{1+x}+O(x^0)
\end{align*}
 for $x\in (-1,0]$,
as well as
\begin{align*}  
\partial_2\Phi\big (x,\gT(x),\gT'(x)\big)&=\frac{2}{1-x}+O(x^0) \\
\partial_3\Phi\big (x,\gT(x),\gT'(x)\big)&=-\frac{4}{1-x}+O(x^0)
\end{align*}
for $x\in [0,1)$.
\end{lemma}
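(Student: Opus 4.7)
The plan is to read the asymptotics directly off the explicit representation $\Phi(x,y,z) = \sum_k \Phi_k(x,y) z^k/\Psi(x,y)$ in \eqref{def:Phi}--\eqref{def:Psi}, together with the factorization $\Psi(x,y) = (1-x^2)\hat\Psi(x,y)$ used in the proof of Proposition~\ref{prop:Rem}. A direct substitution into \eqref{def:Psi} gives $\hat\Psi(-1,y) = 1+8y^2$ and $\hat\Psi(1,y) = y^6$, both bounded away from zero on $y \in [\tfrac{11}{20},\tfrac{21}{20}]$. Hence, by Proposition~\ref{prop:Rem}, $\hat\Psi(x,\gT(x))$ stays bounded below, so the only source of blow-up at $x = \pm 1$ is the prefactor $(1-x^2)^{-1}$.

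I would handle $\partial_3\Phi$ first, because it only involves $\partial_z\Phi = (\Phi_1 + 2\Phi_2 z)/\Psi$. From \eqref{eq:Psi1boundary} one reads $\Phi_1(-1,y) = 8(1+8y^2)$ and $\Phi_2(-1,y) = 0$, so
\[ \partial_3\Phi(x,y,z) = \frac{8(1+8y^2)+O(1+x)}{(1-x^2)\bigl[(1+8y^2)+O(1+x)\bigr]} = \frac{8}{1-x^2} + O(1), \]
uniformly for $(y,z)$ in the relevant range. Expanding $8/(1-x^2) = 4/(1+x) + O(1)$ gives the claim on $(-1,0]$. The case $x \in [0,1)$ is analogous, using $\Phi_1(1,y) = -8y^6$ and $\hat\Psi(1,y) = y^6$, and produces $-4/(1-x) + O(1)$.

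For $\partial_2\Phi$ the quotient rule yields
\[ \partial_y\Phi = \frac{\hat\Psi\,\partial_y N - N\,\partial_y\hat\Psi}{(1-x^2)\,\hat\Psi^2}, \qquad N(x,y,z) := \sum_{k=0}^2 \Phi_k(x,y) z^k. \]
The key observation is that the numerator collapses on the boundary: at $x = -1$, substituting $N(-1,y,z) = 4(1+8y^2)(y+2z)$ and $\partial_y\hat\Psi(-1,y) = 16y$ makes the $z$-dependent terms cancel, leaving $4(1+8y^2)^2$; the same mechanism at $x = +1$ leaves $4y^{12}$. Dividing by $\hat\Psi^2$ and $(1-x^2)$ then produces the singular parts $4/(1-x^2) = 2/(1+x)+O(1)$ at $x=-1$ and $4/(1-x^2) = 2/(1-x)+O(1)$ at $x=+1$.

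The only non-routine step is verifying these two boundary cancellations in the numerator of $\partial_y\Phi$, which rely on the precise algebraic form of $\Phi_k$ and $\Psi$ rather than on any structural argument. Once they are checked, the lemma follows by substituting $y = \gT(x)$, $z = \gT'(x)$ and observing that the $O(1)$ remainders are uniformly bounded via the bounds $\gT \in [\tfrac{11}{20},\tfrac{21}{20}]$ and $\gT' \in [-\tfrac{11}{20},\tfrac12]$ of Proposition~\ref{prop:Rem}.
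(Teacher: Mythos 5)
Your proposal is correct and follows essentially the same approach as the paper: factor $\Psi = (1-x^2)\hat\Psi$, use the boundary values $\Phi_k(\pm 1,y)$ and $\hat\Psi(\pm 1,y)$ from Eqs.~\eqref{def:Phik}, \eqref{def:Psi}, \eqref{eq:Psi1boundary}, and read off the $4/(1\pm x)$ singularities from the $\partial_z\Phi$ quotient. The paper treats $\partial_z\Phi$ explicitly and dismisses the $\partial_y\Phi$ case with ``proved similarly''; your proposal fills in exactly that gap by exhibiting the boundary cancellation in the numerator $\hat\Psi\,\partial_y N - N\,\partial_y\hat\Psi$ (yielding $4(1+8y^2)^2$ at $x=-1$ and $4y^{12}$ at $x=1$), which is a useful addition but not a different method.
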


\begin{proof}
As before, we set
\[ \hat\Psi(x,y):=\frac{\Psi(x,y)}{1-x^2} \]
with $\Psi$ from Eq.~\eqref{def:Psi}.
Then we have
\[ \Phi(x,y,z)=\frac{1}{(1-x^2)\hat\Psi(x,y)}\sum_{k=0}^2 \Phi_k(x,y)z^k \]
with $\Phi_k$ given in Eq.~\eqref{def:Phik}.
Recall that $\hat\Psi$ is a polynomial with no zeros in $[-1,1]\times [\frac{11}{20},\frac{21}{20}]$, see the proof of Proposition \ref{prop:Rem}.
From Eqs.~\eqref{def:Phik} and \eqref{def:Psi} we obtain
\begin{align*}
\Phi_0(-1,y)&=4y+32y^3 & \Phi_0(1,y)&=4y^7 \\
\Phi_1(-1,y)&=8+64y^2 & \Phi_1(1,y)&=-8y^6 \\
\Phi_2(-1,y)&=0 & \Phi_2(1,y)&=0 \\
\hat\Psi(-1,y)&=1+8y^2 & \hat\Psi(1,y)&=y^6.
\end{align*} 
Consequently,
\begin{align*}
 \lim_{x\to -1}\left [(1+x)\partial_z \Phi(x,y,z)\right ]
&=\frac{\Phi_1(-1,y)}{2\hat\Psi(-1,y)}=4 \\
\lim_{x\to 1}\left [(1-x)\partial_z \Phi(x,y,z)\right ]
&=\frac{\Phi_1(1,y)}{2\hat\Psi(1,y)}=-4 .
\end{align*}
The other assertions are proved similarly.
\end{proof}

In order to isolate the singular behavior it is natural to write 
\[ \mc Lu=\mc L_0u +pu'+qu \]
where
\begin{align*}
 \mc L_0u(x)&=u''(x)+\left (\frac{4}{1+x}-\frac{4}{1-x}\right )u'(x)+\left (\frac{2}{1+x}+\frac{2}{1-x}\right )u(x)  \\
 &=u''(x)-\frac{8x}{1-x^2}u'(x)+\frac{4}{1-x^2}u(x) \\
 p(x)&=\partial_3\Phi\big (x,\gT(x),\gT'(x)\big )-\frac{4}{1+x}+\frac{4}{1-x} \\
 q(x)&=\partial_2\Phi\big (x,\gT(x),\gT'(x)\big )-\frac{2}{1+x}-\frac{2}{1-x}. 
 \end{align*}
Lemma \ref{lem:asymPhi} implies that $p$ and $q$ are rational functions with no poles in $[-1,1]$.

\begin{lemma}
\label{lem:fsasym}
The equation $\mc Lu=0$ has fundamental systems $\{u_-,v_-\}$ and $\{u_+,v_+\}$
on $(-1,1)$
which satisfy
\begin{align*} 
u_{-}(x)&=1+O(1+x) \\
u_{-}'(x)&=-\tfrac12+O(1+x) \\
 v_{-}(x)&=O((1+x)^{-3}) 
 \end{align*}
for $x\in (-1,0]$, as well as
\begin{align*}
 u_+(x)&=1+O(1-x) \\
 u_+'(x)&=\tfrac12+O(1-x) \\
  v_+(x)&=O((1-x)^{-3}) 
  \end{align*}
for $x \in [0,1)$.
Furthermore, $u_-,v_-,u_+,v_+\in C^\infty(-1,1)$ and $u_- \in C^\infty([-1,1))$,
$u_+\in C^\infty((-1,1])$.
\end{lemma}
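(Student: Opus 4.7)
The plan is to invoke the classical Frobenius--Fuchs method at the regular singular points $x = \pm 1$. Using the decomposition $\mc L = \mc L_0 + p\partial + q$ together with Lemma \ref{lem:asymPhi}, the singular parts of the coefficients are entirely contained in $\mc L_0$, while $p,q$ are rational functions without poles in $[-1,1]$; hence $\mc L u = 0$ has regular singular points precisely at $x = \pm 1$. Since $(1+x)\partial_3\Phi(x,\gT,\gT') \to 4$ and $(1+x)^2 \partial_2\Phi(x,\gT,\gT') \to 0$ as $x \to -1^+$, the indicial polynomial at $-1$ is $\rho(\rho-1) + 4\rho = \rho(\rho+3)$, giving exponents $\rho \in \{0,-3\}$. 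A symmetric computation at $+1$ (using $(1-x)\partial_3\Phi \to -4$ together with the sign flip from the change of variable $s = 1-x$) yields the same pair of exponents.

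For the larger root $\rho = 0$ at $x = -1$, Fuchs's theorem furnishes a convergent power series solution $u_-(x) = \sum_{k \ge 0} c_k (1+x)^k$ with $c_0 := 1$; its radius of convergence is at least $2$ (the distance to the next singular point at $+1$), so $u_- \in C^\infty([-1,1))$. The $(1+x)^{-1}$ balance in the resulting recursion forces $4c_1 + 2c_0 = 0$, whence $c_1 = -\tfrac12$, yielding $u_-(x) = 1 + O(1+x)$ and $u_-'(x) = -\tfrac12 + O(1+x)$. The symmetric construction at $+1$ produces $u_+ \in C^\infty((-1,1])$ with $U(s) = 1 - \tfrac12 s + O(s^2)$ where $s = 1-x$; the sign $u_+'(1) = +\tfrac12$ then appears simply from $\frac{d}{dx}(1-x) = -1$. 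For the smaller exponent $\rho = -3$ at $x = -1$, since the two indicial roots differ by the positive integer $3$, the standard Frobenius prescription produces a second linearly independent solution
\[ v_-(x) = C \cdot u_-(x) \log(1+x) + (1+x)^{-3} \sum_{k \ge 0} d_k (1+x)^k \]
with $d_0 = 1$ and a possibly nonzero resonance constant $C$; in either case $v_-(x) = O((1+x)^{-3})$ as $x \to -1^+$, since any logarithmic factor is dominated by the algebraic singularity. The construction of $v_+$ near $+1$ is symmetric, and all four functions extend from their local Frobenius disks to $C^\infty$ solutions on the full open interval $(-1,1)$ by Picard--Lindel\"of, because the ODE coefficients are rational with poles only at $\pm 1$.

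The only real subtlety is that the two indicial exponents differ by an integer, which in principle allows logarithmic contributions in $v_\pm$. However, the Lemma requires only the big-$O$ upper bound $O((1 \mp x)^{-3})$, which absorbs any $\log(1 \mp x)$ factor harmlessly, so no fine analysis of the resonance constant $C$ is required. Consequently the proof reduces to (i) identifying the indicial exponents from the residues of $\partial_2\Phi$ and $\partial_3\Phi$ supplied by Lemma \ref{lem:asymPhi}, (ii) quoting Fuchs's theorem for the existence of Frobenius series solutions with the claimed radius of convergence, and (iii) reading off the first two Taylor coefficients of $u_\pm$ from the leading balance in the Frobenius recursion to obtain the stated boundary values.
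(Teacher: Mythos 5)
Your proof is correct and takes the same approach as the paper: the paper's proof simply states that $x=\pm1$ are regular singular points with Frobenius indices $\{-3,0\}$ and invokes Frobenius' method. You have merely fleshed out the details the paper leaves implicit (the indicial polynomial computation, the first-order recursion coefficient $c_1 = -\tfrac12$, the possible log term in $v_\pm$, and the global extension by Picard--Lindel\"of), all correctly.
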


\begin{proof}
The coefficients of the equation $\mc Lu=0$ are rational functions and the only poles
in $[-1,1]$ are at $x=-1$ and $x=1$.
These poles are regular singular points of the equation with Frobenius indices $\{-3,0\}$.
Consequently,
the statements follow by Frobenius' method.
\end{proof}

\subsection{Numerical construction of an approximate fundamental system}

We obtain an approximate fundamental system $\{u_-,u_+\}$, where $u_\pm$ is smooth at $\pm 1$, by a Chebyshev pseudospectral method. 
As always, special care has to be taken near the singular endpoints $\pm 1$.
Solutions $u$ of $\mc Lu=0$ that are regular at $-1$ must satisfy $u'(-1)+\frac12 u(-1)=0$. Similarly, regularity at $1$ requires $u'(1)-\frac12 u(1)=0$, cf.~Eq.~\eqref{eq:RegRem}.
If one sets
\[ u_\pm(x)=\frac{w_\pm(x)}{(1\pm x)^3}, \] 
the regularity conditions $u_\pm'(\pm 1)=\pm \frac12 u_\pm(\pm 1)$ translate into $w_\pm'(\pm 1)=\pm 2w_\pm(\pm 1)$.
Consequently, we use the basis functions $\psi_{\pm,n}: [-1,1]\to \R$, $n\in \N$, given by
\begin{align}
\label{def:psi}
\psi_{\pm,n}(x)&:=T_n(x)\pm [T_n'(\pm 1)\mp 2T_n(\pm 1)](1\mp x)
\end{align} 
which have the necessary regularity conditions automatically built in, i.e.,
$\psi_{\pm,n}'(\pm 1)=\pm 2\psi_{\pm,n}(\pm 1)$ for all $n\in \N$.
Observe that $w_\pm$ is expected to be bounded on $[-1,1]$, see Lemma \ref{lem:fsasym}.
For brevity, we also set
\begin{equation}
\hat\psi_{\pm,n}(x):=\frac{\psi_{\pm,n}(x)}{(1\pm x)^3}.
\end{equation}
We enforce the normalization
\[ \sum_{n=1}^{N_\pm}c_{\pm,n}\hat\psi(\pm 1)=1, \]
which is used to fix the coefficients $c_{\pm, 1}$.
The remaining coefficients are obtained numerically by solving the root finding problem
\[ \mc L\left (\sum_{n=1}^{N_\pm}c_{\pm,n}\hat\psi_{\pm,n}\right )(x_k)=0,\quad 
x_k=\cos\left (\frac{k\pi}{N_\pm}\right ),\quad k= 1,2,\dots,N_\pm-1 \]
with $N_\pm=30$.
Finally, we rationalize the floating-point coefficients.
The resulting coefficients are listed in Tables \ref{tab:cm} and \ref{tab:cp}.

\subsection{Rigorous bounds on the approximate fundamental system}
The numerical approximation leads to the following definition.

\begin{definition}
\label{def:fs}
We set
\[ u_\pm(x):=\frac{w_\pm(x)}{(1\pm x)^3}:=\frac{1}{(1\pm x)^3}\sum_{n=1}^{30} c_{\pm,n}\psi_{\pm,n}(x) \]
where
the coefficients $(c_{\pm,n})_{n=2}^{30}\subset \Q$ are given in Tables \ref{tab:cm} and \ref{tab:cp}, respectively.
The coefficients $c_{\pm,1}$ are determined by the requirement $u_\pm(\pm 1)=1$.
\end{definition}

Next, we analyze the approximate fundamental system $\{u_-,u_+\}$.

\begin{proposition}
\label{prop:fs}
We have $W(u_-,u_+)(x)=(1-x^2)^{-4} W_0(x)$, where $W_0$ is a polynomial with no zeros in $[-1,1]$.
Furthermore, the functions $u_\pm$ satisfy
\[ \tilde{\mc L}u_{\pm}=0, \]
where $\mc{\tilde L} u:=\mc L_0u+\tilde pu'+\tilde q u$, and
\[ \|\tilde p-p\|_{L^\infty}\leq \tfrac{3}{100},\qquad \|\tilde q-q\|_{L^\infty}\leq \tfrac{1}{20}. \]
\end{proposition}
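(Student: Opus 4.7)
The plan is to proceed by explicit rational computation, mirroring the strategy used in Proposition \ref{prop:Rem}. Write $u_- = w_-/(1-x)^3$ and $u_+ = w_+/(1+x)^3$ with $w_\pm \in \Q[x]$ polynomials of degree at most $30$. The quotient rule applied to the definition $W(u_-,u_+) = u_- u_+' - u_-' u_+$ gives directly
\[
  W(u_-, u_+)(x) \;=\; \frac{(1-x^2)\,W(w_-, w_+)(x) \;-\; 6\,w_-(x)\,w_+(x)}{(1-x^2)^4},
\]
so I take $W_0(x) := (1-x^2)\,W(w_-, w_+)(x) - 6\,w_-(x)\,w_+(x) \in \Q[x]$, a polynomial of degree at most $60$. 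To verify $W_0$ has no zeros on $[-1,1]$, I expand $W_0$ in the Chebyshev basis with rational coefficients, bound $\|W_0'\|_{L^\infty}$ crudely via $\|T_n'\|_{L^\infty}\le n^2$ and the $\ell^1$-norm of the coefficients, and apply Lemma \ref{lem:estf} with a sufficiently large $N$ to produce a concrete rational lower bound $|W_0(x)| \ge c > 0$ on $[-1,1]$.

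Once $W(u_-, u_+)$ is nonvanishing, the requirement $\tilde{\mc L} u_\pm = \mc L_0 u_\pm + \tilde p\, u_\pm' + \tilde q\, u_\pm = 0$ is a $2\times 2$ linear system in $(\tilde p(x), \tilde q(x))$ with determinant $W(u_-, u_+)(x)$, uniquely solvable by Cramer's rule. Introducing the residuals $r_\pm := \mc L u_\pm = \mc L_0 u_\pm + p\,u_\pm' + q\,u_\pm$ and subtracting $p\,u_\pm' + q\,u_\pm$ from the two Cramer equations, the system reduces to $(\tilde p - p)\,u_\pm' + (\tilde q - q)\,u_\pm = -r_\pm$, whence
\[
  \tilde p - p \;=\; \frac{u_+ r_- - u_- r_+}{W(u_-, u_+)}, \qquad
  \tilde q - q \;=\; \frac{u_-' r_+ - u_+' r_-}{W(u_-, u_+)}.
\]
Since $u_\pm$ was constructed pseudospectrally so that $\mc L u_\pm$ nearly vanishes at the collocation points, the residuals $r_\pm$ are rational functions whose numerators (once the denominators are cleared) carry small coefficients.

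To bound the right-hand sides in $L^\infty$, I multiply through by $(1-x^2)^4$ to remove the factor present in $W(u_-,u_+)$, arriving at expressions of the form $N(x)/W_0(x)$ with $N \in \Q[x]$, provided the apparent singularities of $r_\pm$ at $x = \pm 1$ cancel. This cancellation is guaranteed by the regularity conditions $u_\pm'(\pm 1) = \pm \tfrac12 u_\pm(\pm 1)$ together with the asymptotic expansions of $\partial_2\Phi$ and $\partial_3\Phi$ in Lemma \ref{lem:asymPhi}; once carried out explicitly, each $N$ is a rational-coefficient polynomial of moderate degree. Re-expanding $N$ in the Chebyshev basis, bounding $\|N\|_{L^\infty}$ by the $\ell^1$-norm of its coefficients, and combining with the lower bound on $|W_0|$ from the first step, Lemma \ref{lem:estf} on a sufficiently fine grid delivers the claimed bounds $\|\tilde p - p\|_{L^\infty} \le \tfrac{3}{100}$ and $\|\tilde q - q\|_{L^\infty} \le \tfrac{1}{20}$.

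The main obstacle is verifying the endpoint cancellations and arranging the large-degree rational expressions in a canonical form where rational interval arithmetic is tight enough to yield the stated constants. This is a mechanical but sizable computation, essentially parallel in spirit to the estimation of $\mc R(\gT)$ in Proposition \ref{prop:Rem}, now applied to the more intricate Cramer's rule expressions coming from the fundamental-system representation.
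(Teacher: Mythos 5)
Your proposal is correct and follows essentially the same route as the paper. The Wronskian decomposition $W_0(x) = (1-x^2)W(w_-,w_+)(x) - 6\,w_-(x)w_+(x)$ is exactly the paper's, as is the strategy of Chebyshev re-expansion plus Lemma~\ref{lem:estf} to certify $W_0$ nonvanishing and to bound the resulting rational expressions. Your Cramer's-rule reformulation of $\tilde p, \tilde q$ via the residuals $r_\pm = \mc L u_\pm$ is algebraically equivalent to the paper's direct formulas $\tilde p = (u_+\mc L_0 u_- - u_-\mc L_0 u_+)/W$, $\tilde q = (u_-'\mc L_0 u_+ - u_+'\mc L_0 u_-)/W$; subtracting $p,q$ from these and using $pW = p(u_-u_+' - u_-'u_+)$ recovers your expressions. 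One small gain of the paper's split $p - \tilde p = (P_2 W_0 - P_1 P_3)/(P_3 W_0)$ is that $\tilde p$ is computed purely from $\mc L_0 u_\pm$ (so its apparent $(1-x^2)^{-4}$ singularity is isolated in $P_1$), while $p$ is represented separately as $P_2/P_3$ via the explicit cancellation $\Phi_1(x,y) + 8x\hat\Psi(x,y) = 0$ at $x = \pm 1$; keeping these two polynomial pieces separate makes the endpoint cancellations transparent before you ever form the residual, whereas your $r_\pm$ bundles them. The paper also tightens the interval-arithmetic estimates with the empirical prefactor $(\tfrac{13}{10}-x^2)^8$ and a truncated Chebyshev approximation to $1/(W_0 P_3)$, details you correctly anticipate as ``mechanical but sizable.''
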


\begin{proof}
We temporarily set $p_\pm(x):=(1\pm x)^{-3}$.
Then we have
\[ W(u_-,u_+)=W(p_- w_-,p_+ w_+)=W(p_-,p_+)w_-w_++p_-p_+W(w_-,w_+) \]
and, since $W(p_-,p_+)(x)=-6(1-x^2)^{-4}$, we infer
$W(u_-,u_+)(x)=(1-x^2)^{-4} W_0(x)$ with
\[ W_0(x)=-6 w_-(x)w_+(x)+(1-x^2)W(w_-,w_+)(x). \]
Obviously, $W_0$ is a polynomial with $\deg W_0\leq 61$, see Definition \ref{def:fs}.
We re-expand $W_0$ in Chebyshev polynomials,
\[ W_0(x)=\sum_{n=0}^{61} w_{0,n}T_n(x), \]
by solving the (possibly overdetermined) system
\[ \sum_{n=0}^{61} w_{0,n}T_n(x_k)=W_0(x_k),\quad x_k=-\tfrac12+\tfrac{k}{61}, \quad k=0,1,2,\dots,61 \]
for the coefficients $(w_{0,n})_{n=0}^{61}\subset \Q$. 
From the re-expansion we obtain the estimate
\[ \|W_0'\|_{L^\infty}\leq \sum_{n=0}^{61} |w_{0,n}|\|T_n'\|_{L^\infty}
\leq \sum_{n=0}^{61} n^2 |w_{0,n}|\leq 400 \] 
and Lemma \ref{lem:estf} with $N=2000$ yields
\[ \max_{[-1,1]}W_0\leq \max_{\Omega_N}W_0+\tfrac{2}{N}\|W_0'\|_{L^\infty}
\leq -\tfrac{94}{100}+\tfrac{400}{1000}\leq -\tfrac12. \]
This shows that $W_0$ has no zeros in $[-1,1]$.

We set
\[ \tilde p:=\frac{u_+\mc L_0u_- - u_-\mc L_0 u_+}{W(u_-,u_+)},\qquad
\tilde q:=\frac{u_-'\mc L_0u_+ - u_+'\mc L_0 u_-}{W(u_-,u_+)}. \]
By construction, we have
$\tilde{\mc L} u_\pm=\mc L_0u_\pm+\tilde pu_\pm'+\tilde qu_\pm=0$.
In order to estimate $p-\tilde p$, we first note that
\[ u_+(x)\mc L_0 u_-(x)-u_-(x)\mc L_0 u_+(x)=O((1-x^2)^{-4}) \] since the most singular terms cancel. Consequently,
\[ P_1(x):=(1-x^2)^4[u_+(x)\mc L_0u_-(x) - u_-(x)\mc L_0 u_+(x) ] \]
is a polynomial of degree at most $66$.
Furthermore, recall that
\begin{align*}
 p(x)&=\partial_3 \Phi\big (x,\gT(x), \gT'(x)\big )+\frac{8x}{1-x^2}
 =\frac{\Phi_1(x,\gT(x))+2\Phi_2(x,\gT(x))\gT'(x)}
 {\Psi(x,y)}+\frac{8x}{1-x^2} \\
&=2\gT'(x)\frac{\hat\Phi_2(x,\gT(x))}{\hat\Psi(x,\gT(x))}
+\frac{1}{1-x^2}\frac{\Phi_1(x,\gT(x))+8x\hat\Psi(x,\gT(x))}{\hat\Psi(x,\gT(x))},
\end{align*}
where we use the notation
\[ \hat\Psi(x,y)=\frac{\Psi(x,y)}{1-x^2},\qquad \hat\Phi_k(x,y)=\frac{\Phi_k(x,y)}{1-x^2}. \]
From Eqs.~\eqref{def:Phik}, \eqref{def:Psi} it follows that $\hat\Psi$ and $\hat\Phi_2$ are polynomials.
Moreover, we have
\[ \Phi_1(x, y)+8x\hat\Psi(x,y)=0 \]
for $x\in \{-1,1\}$
and this shows that $p$ is of the form
$p(x)=\frac{P_2(x)}{P_3(x)}$
where
\[ P_2(x):=2\gT'(x)\hat\Phi_2(x,\gT(x))+\frac{\Phi_1(x,\gT(x))+8x\hat\Psi(x,\gT(x))}{1-x^2} \]
is a polynomial of degree at most $263$ and
$P_3(x):=\hat\Psi(x,\gT(x))$.
Recall that $P_3$ has no zeros on $[-1,1]$ and $\deg P_3\leq 264$.
Consequently, we obtain
\[ p-\tilde p=\frac{P_2}{P_3}-\frac{P_1}{W_0}=\frac{P_2 W_0-P_1 P_3}{P_3W_0}. \]
In order to estimate this expression, we proceed as in the proof of Proposition \ref{prop:Rem}.
First, we straighten the denominator, i.e., we try to find an approximation to $\frac{1}{W_0P_3}$ as a truncated Chebyshev expansion.
To improve the numerical convergence, it is advantageous to multiply the numerator and denominator by the polynomial $(\frac{13}{10}-x^2)^8$ (this factor is found empirically).
Consequently, we write
$p-\tilde p=\frac{P_4}{P_5}$
where 
\[ P_4(x)=(\tfrac{13}{10}-x^2)^8[P_2(x)W_0(x)-P_1(x)P_3(x)],\qquad
P_5(x)=(\tfrac{13}{10}-x^2)^8 P_3(x)W_0(x). \]
Note that $P_4$ and $P_5$ are polynomials with rational coefficients and
$\deg P_4\leq 346$, $\deg P_5\leq 341$.
Next, we obtain an approximation to $1/P_5$ of the form
\[ \frac{1}{P_5(x)}\approx \sum_{n=0}^{30}r_n T_n(x)=:R(x) \]
where the coefficients $(r_n)_{n=1}^{30}\subset \Q$, obtained by a pseudospectral method, are given in Table \ref{tab:r1} and $r_0=-\frac{623}{23}$.
We write $p-\tilde p=\frac{RP_4}{RP_5}$ and note that $\deg (RP_4)\leq 376$, $\deg (RP_5)\leq 371$.
We re-expand $RP_4$ and $RP_5$ as
\[ RP_4=\sum_{n=0}^{376}p_{4,n}T_n,\qquad RP_5=\sum_{n=0}^{376} p_{5,n}T_n \]
by solving the linear equations
\begin{align*} \sum_{n=0}^{376}p_{4,n}T_n(x_k)&=RP_4(x_k), \qquad
\sum_{n=0}^{376}p_{5,n}T_n(x_k)=RP_5(x_k)
\end{align*}
for
$x_k=-\tfrac12+\tfrac{k}{376}$ and
$k=0,1,\dots,376$. 
This yields the bound
\[ \|(RP_5)'\|_{L^\infty}\leq \sum_{n=0}^{376} |p_{5,n}|\|T_n'\|_{L^\infty}
\leq \sum_{n=0}^{376}n^2 |p_{5,n}|\leq 17 \]
and from Lemma \ref{lem:estf} with $N=1000$ we infer
\[ \min_{[-1,1]}RP_5\geq \min_{\Omega_N}RP_5-\tfrac{2}{N}\|(RP_5)'\|_{L^\infty}
\geq \tfrac{98}{100}-\tfrac{34}{1000}\geq \tfrac{94}{100}. \]
Consequently, we find
\[ \|p-\tilde p\|_{L^\infty}=\left \|\tfrac{RP_4}{RP_5}\right \|_{L^\infty}
\leq \tfrac{100}{94}\sum_{n=0}^{376}|p_{4,n}|\leq \tfrac{3}{100}. \]
The bound for $q-\tilde q$ is proved analogously.
\end{proof}

\begin{proposition}
\label{prop:fsbounds}
The approximate fundamental system $\{u_-,u_+\}$ satisfies
the bounds
\begin{align*} 
|u_-(x)|\int_x^1 \frac{|u_+(y)|}{|W(y)|}\d y+|u_+(x)|\int_{-1}^x \frac{|u_-(y)|}{|W(y)|}\d y&\leq \tfrac{7}{10} \\
|u_-'(x)|\int_x^1 \frac{|u_+(y)|}{|W(y)|}\d y+
|u_+'(x)|\int_{-1}^x \frac{|u_-(y)|}{|W(y)|}\d y&\leq \tfrac12 
\end{align*}
for all $x\in (-1,1)$, where $W(y):=W(u_-,u_+)(y)$.
\end{proposition}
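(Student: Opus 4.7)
The plan is to reduce both bounds to rigorous polynomial arithmetic, following the template of Propositions \ref{prop:Rem} and \ref{prop:fs}. Using $u_\pm(y) = w_\pm(y)/(1\pm y)^3$ and $W(y) = W_0(y)/(1-y^2)^4$, the integrands rewrite as
\[ \frac{u_+(y)}{W(y)} = \frac{w_+(y)(1-y)^4(1+y)}{W_0(y)}, \qquad \frac{u_-(y)}{W(y)} = \frac{w_-(y)(1+y)^4(1-y)}{W_0(y)}, \]
i.e.\ polynomials divided by $W_0$, hence smooth on $[-1,1]$ by Proposition \ref{prop:fs}. As a first step I would carry out the sign analysis: verify $w_\pm > 0$ on $[-1,1]$ by Chebyshev re-expansion of $w_\pm$ and Lemma \ref{lem:estf}, exactly as was done for $W_0$ in Proposition \ref{prop:fs}. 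Combined with $W_0 < 0$, this yields $u_\pm > 0$ on the respective regular domains and $u_\pm(y)/W(y) < 0$ on $(-1,1)$, so the absolute values on $u_\pm$ may be dropped and $|u_\pm(y)/W(y)|$ replaced by the explicit non-negative rational functions $\mc H_\pm(y) := -u_\pm(y)/W(y)$.

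Next, following the denominator-straightening trick used in Propositions \ref{prop:Rem} and \ref{prop:fs}, I would approximate $1/W_0$ by a truncated Chebyshev expansion $R$ with rigorous error $\varepsilon_0 := \|1/W_0 - R\|_{L^\infty}$. This yields explicit polynomial approximations $\tilde{\mc H}_\pm$ to $\mc H_\pm$ and explicit polynomial approximations $\tilde J_+(x)$ to $J_+(x) := \int_x^1 \mc H_+$ and $\tilde J_-(x)$ to $J_-(x) := \int_{-1}^x \mc H_-$. The key algebraic observation is that $\tilde{\mc H}_+$ vanishes to order four at $y=1$ and $\tilde{\mc H}_-$ to order four at $y=-1$, hence $\tilde J_\pm(x) = (1\mp x)^5 \tilde P_\pm(x)$ for explicit polynomials $\tilde P_\pm$, and both
\[ u_\pm(x)\tilde J_\mp(x) = w_\pm(x)(1\pm x)^2 \tilde P_\mp(x),\qquad u_\pm'(x)\tilde J_\mp(x) = \bigl[(1\pm x)^2 w_\pm'(x) \mp 3(1\pm x) w_\pm(x)\bigr]\tilde P_\mp(x) \]
are explicit polynomials. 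The residual errors $u_\pm(x)[J_\mp(x) - \tilde J_\mp(x)]$ and $u_\pm'(x)[J_\mp(x) - \tilde J_\mp(x)]$ are of size $O(\varepsilon_0)$: the extra vanishing of $J_\mp - \tilde J_\mp$ at the singular endpoint absorbs the blow-up of $u_\pm$ and $u_\pm'$ there.

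For the first bound, $F_1(x) := u_-(x)J_+(x) + u_+(x)J_-(x)$ is therefore an explicit polynomial plus an $O(\varepsilon_0)$ remainder; I would Chebyshev-re-expand the polynomial part, bound its derivative by $\sum n^2|\hat f_n|$, and invoke Lemma \ref{lem:estf} on a fine grid to obtain $\|F_1\|_{L^\infty} \leq 7/10$. For the second bound, $|u_\pm'(x)|$ is unbounded near $x=\pm 1$, so rather than separate $\sup|u_\pm'|\cdot\sup J_\mp$ I would use $|u_\pm'(x)| J_\mp(x) = |u_\pm'(x) J_\mp(x)|$ (since $J_\mp \ge 0$) and apply the same polynomial-plus-error analysis to the smoothly bounded products $u_\pm'(x) J_\mp(x)$, then sum and apply Lemma \ref{lem:estf} again. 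The main obstacle will be keeping the cumulative error small enough in this second estimate: the polynomial representatives of $u_\pm'(x)\tilde J_\mp(x)$ have higher degree and larger norm than those entering $F_1$, so the Chebyshev truncation order for $R$ must be chosen correspondingly large and the evaluation grid in Lemma \ref{lem:estf} correspondingly fine, but the remaining work is routine bookkeeping within the framework already deployed in the paper.
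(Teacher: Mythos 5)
Your plan is essentially the paper's proof, step for step: positivity of $w_\pm$, negativity of $W_0$, a truncated Chebyshev approximation $R$ of $1/W_0$, the order-five vanishing of the resulting integrals at the singular endpoints so that $u_\pm\tilde J_\mp$ and $u_\pm'\tilde J_\mp$ are explicit polynomials, and final sup-norm bounds via Lemma~\ref{lem:estf} on a fine grid. The only difference is cosmetic: rather than carrying an explicit error $\varepsilon_0 = \|1/W_0 - R\|_{L^\infty}$ through the integrals, the paper writes $1/W_0 = R/(RW_0)$ inside the integrand, proves $\min_{[-1,1]} RW_0 \geq \tfrac{98}{100}$, and factors out the constant $\tfrac{100}{98}$, which eliminates the residual term and reduces everything directly to sup-norm estimates on the explicit polynomials $P$ and $|Q_-|+|Q_+|$.
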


\begin{proof}
As before, we write $u_\pm(x)=(1\pm x)^{-3}w_\pm(x)$ and recall that $w_\pm$ are polynomials of degree $30$, see Definition \ref{def:fs}.
First, we obtain an approximation to $1/W_0$, where $W(x)=(1-x^2)^{-4} W_0(x)$, see Proposition \ref{prop:fs}.
By employing the usual pseudospectral method, 
we find
\[ \frac{1}{W_0(x)}\approx \sum_{n=0}^{22}r_n T_n(x)=:R(x) \]
with the coefficients $(r_n)_{n=0}^{22}\subset \Q$ given in Table \ref{tab:r2}.
Next, we note that 
\[ |\psi_{-,n}'(x)|\leq |T_n'(x)|+|T_n'(-1)|+2|T_n(-1)|\leq 2n^2+2 \]
for all $x\in [-1,1]$,
see Eq.~\eqref{def:psi},
and thus,
\[ \|w_-'\|_{L^\infty}\leq \sum_{n=1}^{30}|c_{-,n}|\|\psi_{-,n}'\|_{L^\infty}
\leq 2\sum_{n=1}^{30}(n^2+1)|c_{-,n}|\leq 60. \]
Consequently, Lemma \ref{lem:estf} with $N=600$ yields
\[ \min_{[-1,1]}w_-\geq \min_{\Omega_N}w_- - \tfrac{2}{N}\|w_-'\|_{L^\infty}\geq \tfrac{7}{10}-\tfrac{1}{5}=\tfrac12 \]
and in particular, $w_->0$.
Analogously, we see that $w_+>0$ on $[-1,1]$.
Furthermore, from the proof of Proposition \ref{prop:fs} we recall that $W_0<0$ on $[-1,1]$.
Consequently, we find
\begin{align*}
A(x):&=|u_-(x)|\int_x^1 \frac{|u_+(y)|}{|W(y)|}\d y+|u_+(x)|\int_{-1}^x \frac{|u_-(y)|}{|W(y)|}\d y \\
&=-\frac{w_-(x)}{(1-x)^3}\int_x^1 (1-y)^4(1+y)\frac{R(y) w_+(y)}{R(y) W_0(y)}\d y \\
&\quad -\frac{w_+(x)}{(1+x)^3}\int_{-1}^x (1+y)^4(1-y)\frac{R(y) w_-(y)}{R(y) W_0(y)}\d y.
\end{align*}
Note that $RW_0$ is a polynomial of degree at most $22+61=83$, see the proof of Proposition \ref{prop:fs}.
We re-expand $RW_0$ by solving the linear system
\[ \sum_{n=0}^{83} a_n T_n(x_k)=R(x_k)W_0(x_k), \quad x_k=-\tfrac12+\tfrac{k}{83},\quad k=0,1,\dots,83 \]
over $\Q$, which yields the estimate
\[ \|(RW_0)'\|_{L^\infty}\leq \sum_{n=0}^{83}n^2 |a_n|\leq 3. \]
Thus, from Lemma \ref{lem:estf} with $N=600$ we infer
\[ \min_{[-1,1]}RW_0\geq \min_{\Omega_N}RW_0-\tfrac{2}{N}\|(RW_0)'\|_{L^\infty} 
\geq \tfrac{99}{100}-\tfrac{1}{100}=\tfrac{98}{100} \]
and this yields
\begin{align*}
A(x)&\leq \tfrac{100}{98}\left [\frac{w_-(x)}{(1-x)^3}I_+(x)+\frac{w_+(x)}{(1+x)^3}I_-(x)\right ],
\end{align*}
where
\begin{align}
\label{def:Ipm}
I_-(x)&:=\int_{-1}^x (1+y)^4(1-y)[-R(y)] w_-(y)\d y \nonumber \\
I_+(x)&:=\int_x^1 (1-y)^4(1+y)[-R(y)]w_+(y)\d y.
\end{align}
The integrands of $I_\pm$ are polynomials and hence, $I_\pm$ can be computed explicitly.
More precisely, we write
\[ P_\pm(y):=(1\mp y)^4(1\pm y)[-R(y)]w_\pm(y) \]
and note that $\deg P_\pm \leq 57$.
Consequently, we may re-expand $P_\pm$ as
$P_\pm(y)=\sum_{n=0}^{57}p_{\pm,n}y^n$
by solving the linear systems
\[ \sum_{n=0}^{57} p_{\pm,n}x_k^n=P_\pm(x_k),\quad x_k=-\tfrac12+\tfrac{k}{57},\quad k=0,1,2,\dots,57 \]
over $\Q$.
From this we obtain the explicit expressions
\begin{align*}
I_-(x)&=\sum_{n=0}^{57}\frac{p_{-,n}}{n+1}x^{n+1}-\sum_{n=0}^{57}\frac{p_{-,n}}{n+1}(-1)^{n+1} \\
I_+(x)&=\sum_{n=0}^{57}\frac{p_{+,n}}{n+1}-\sum_{n=0}^{57}\frac{p_{+,n}}{n+1}x^{n+1}.
\end{align*}
Furthermore, directly from Eq.~\eqref{def:Ipm} we see that $I_\pm(x)=O((1\mp x)^5)$.
Consequently, 
\[ P(x):=\frac{w_-(x)}{(1-x)^3}I_+(x)+\frac{w_+(x)}{(1+x)^3}I_-(x) \]
is a polynomial of degree at most $85$.
Thus, another re-expansion yields the Chebyshev representation
$P(x)=\sum_{n=0}^{85}p_n T_n(x)$
and we obtain the bound
\[ \|P'\|_{L^\infty}\leq \sum_{n=0}^{85} n^2 |p_n|\leq 3. \]
Consequently, Lemma \ref{lem:estf} with $N=1000$ yields
\[ A(x)\leq \tfrac{100}{98}\|P\|_{L^\infty}\leq \tfrac{100}{98}\left (\max_{\Omega_N}|P|+\tfrac{2}{N}\|P'\|_{L^\infty}\right ) 
\leq \tfrac{100}{98}\left (\tfrac{591}{1000}+\tfrac{6}{1000} \right )\leq \tfrac{7}{10}. \]

To prove the second bound, we set $Q_{\pm}(x):=u_\pm'(x)I_{\mp}(x)$ and note that
\[ u_\pm'(x)=\frac{w_\pm'(x)}{(1\pm x)^3}\mp 3\frac{w_\pm(x)}{(1\pm x)^4}. \]
Consequently, $Q_\pm$ are polynomials with $\deg Q_\pm\leq 84$ and a Chebyshev re-expansion yields
\[ \|Q_-'\|_{L^\infty}+\|Q_+'\|_{L^\infty}\leq 20. \]
Thus, from Lemma \ref{lem:estf} with $N=800$ we infer\footnote{Strictly speaking, a slight variant of Lemma \ref{lem:estf} is necessary here since the function $|Q_-|+|Q_+|$ is only piecewise $C^1$.}
\begin{align*} \max_{[-1,1]} \left (|Q_-|+|Q_+|\right )&\leq \max_{\Omega_N}\left (|Q_-|+|Q_+|\right )+\tfrac{2}{N}\left (\|Q_-'\|_{L^\infty}+\|Q_+'\|_{L^\infty}\right ) \\
&\leq \tfrac{41}{100}+\tfrac{5}{100}=\tfrac{46}{100}
\end{align*}
which implies
\begin{align*}
|u_-'(x)|\int_x^1 \frac{|u_+(y)|}{|W(y)|}\d y+
|u_+'(x)|\int_{-1}^x \frac{|u_-(y)|}{|W(y)|}\d y&\leq
\tfrac{100}{98}\left (|u_-'(x)I_+(x)|+|u_+'(x)I_-(x)|\right ) \\
&=\tfrac{100}{98}\left (|Q_-(x)|+|Q_+(x)|\right ) \\
&\leq \tfrac{100}{98}\tfrac{46}{100}\leq \tfrac12
\end{align*}
for all $x\in (-1,1)$.
\end{proof}

\subsection{Construction of the Green function}

Based on Proposition \ref{prop:fs} we can now invert the operator $\tilde{\mc L}$.
A solution of the equation $\tilde{\mc L}u=f \in L^\infty(-1,1)$ is given by
\[ u(x)=\int_{-1}^1 G(x,y)f(y)\d y, \]
with the Green function
\[ G(x,y)=\frac{1}{W(u_{-},u_+)(y)}\left \{\begin{array}{lr}u_{-}(x)u_+(y) & x\leq y \\
u_+(x)u_{-}(y) & x\geq y \end{array} \right . . \]
In fact, this is the \emph{unique} solution that belongs to $L^\infty(-1,1)$.
Consequently, we have
\[ \tilde{\mc L}^{-1}f(x)=\int_{-1}^1 G(x,y)f(y)\d y. \]
The bounds from Proposition \ref{prop:fsbounds} immediately imply the following estimate.

\begin{corollary}
\label{cor:CL}
We have the bound
\[ \|\tilde{\mc L}^{-1}f\|_{W^{1,\infty}}\leq \|f\|_{L^\infty} \]
for all $f\in L^\infty(-1,1)$.
\end{corollary}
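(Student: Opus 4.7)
The plan is to exploit the explicit Green-function representation
\[
\tilde{\mc L}^{-1}f(x)=u_+(x)\int_{-1}^x\frac{u_-(y)}{W(y)}f(y)\,\d y
+u_-(x)\int_x^1\frac{u_+(y)}{W(y)}f(y)\,\d y,
\]
which is just a rewriting of the formula involving $G(x,y)$ in the excerpt. The $L^\infty$ bound on $\tilde{\mc L}^{-1}f$ is then immediate: pulling out $\|f\|_{L^\infty}$ and applying the triangle inequality gives
\[
|\tilde{\mc L}^{-1}f(x)|\leq \|f\|_{L^\infty}
\left[|u_+(x)|\int_{-1}^x\tfrac{|u_-(y)|}{|W(y)|}\d y
+|u_-(x)|\int_x^1\tfrac{|u_+(y)|}{|W(y)|}\d y\right]
\leq \tfrac{7}{10}\|f\|_{L^\infty}
\]
by the first inequality of Proposition \ref{prop:fsbounds}.

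To control the derivative, I would differentiate under the integral sign. The boundary contributions from the endpoint $y=x$ combine to
$u_+(x)\frac{u_-(x)}{W(x)}f(x)-u_-(x)\frac{u_+(x)}{W(x)}f(x)=0$,
so they cancel exactly (this is the point of using the Wronskian in the denominator), leaving
\[
(\tilde{\mc L}^{-1}f)'(x)=u_+'(x)\int_{-1}^x\frac{u_-(y)}{W(y)}f(y)\,\d y
+u_-'(x)\int_x^1\frac{u_+(y)}{W(y)}f(y)\,\d y.
\]
Applying the triangle inequality and the second inequality of Proposition \ref{prop:fsbounds} then yields $\|(\tilde{\mc L}^{-1}f)'\|_{L^\infty}\leq \tfrac12\|f\|_{L^\infty}$.

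Combining the two pointwise estimates with the convention $\|u\|_{W^{1,\infty}}=\sqrt{\|u'\|_{L^\infty}^2+\|u\|_{L^\infty}^2}$ from the notation section gives
\[
\|\tilde{\mc L}^{-1}f\|_{W^{1,\infty}}
\leq \sqrt{\tfrac{1}{4}+\tfrac{49}{100}}\,\|f\|_{L^\infty}
=\tfrac{\sqrt{74}}{10}\|f\|_{L^\infty}\leq \|f\|_{L^\infty},
\]
since $74<100$. There is no real obstacle here: all the work has already been absorbed into Proposition \ref{prop:fsbounds}, whose constants $\tfrac{7}{10}$ and $\tfrac12$ were manifestly tuned so that $(7/10)^2+(1/2)^2\leq 1$. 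The only verification I would double-check carefully is the cancellation of the boundary terms upon differentiation, which holds because $G$ is continuous across the diagonal $x=y$.
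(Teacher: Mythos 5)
Your proof is correct and follows the paper's argument essentially verbatim: both use the Green-function representation, differentiate under the integral (where the diagonal terms cancel via the Wronskian), and apply the two estimates of Proposition~\ref{prop:fsbounds} to obtain $\sqrt{(7/10)^2+(1/2)^2}\leq 1$. The only addition on your part is spelling out the boundary-term cancellation, which the paper leaves implicit.
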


\begin{proof}
By definition we have
\[ \tilde{\mc L}^{-1}f(x)=u_-(x)\int_x^1 \frac{u_+(y)}{W(y)}f(y)\d y
+u_+(x)\int_{-1}^x \frac{u_-(y)}{W(y)}f(y)\d y \]
and thus,
\[ (\tilde{\mc L}^{-1} f)'(x)=u_-'(x)\int_x^1 \frac{u_+(y)}{W(y)}f(y)\d y
+u_+'(x)\int_{-1}^x \frac{u_-(y)}{W(y)}f(y)\d y, \]
where $W(y)=W(u_-,u_+)(y)$.
Consequently, from Proposition \ref{prop:fsbounds} we infer 
\begin{align*} \|\tilde{\mc L}^{-1}f\|_{W^{1,\infty}}= \left (
\|(\tilde{\mc L}^{-1}f)'\|_{L^\infty}^2+\|\tilde{\mc L}^{-1}f\|_{L^\infty}^2 \right )^{1/2} 
\leq \sqrt{(\tfrac12)^2+(\tfrac{7}{10})^2}\,\|f\|_{L^\infty}\leq \|f\|_{L^\infty}.
\end{align*}
\end{proof}

\section{Linear stability of the Skyrmion}

\noindent Now we are ready to conclude the proof of Theorem \ref{thm:main}.

\subsection{The main contraction argument}
 
Recall that we aim for solving the equation $\mc R(\gT+\delta)=0$, i.e.,
\[ \mc L\delta=-\mc R(\gT)-\mc N(\delta), \]
see Proposition \ref{prop:CN}.
We rewrite this equation as
\[ \tilde{\mc L}\delta=-\mc R(\gT)+(\tilde{\mc L}-\mc L)\delta-\mc N(\delta) \]
and apply $\tilde{\mc L}^{-1}$, which yields
\[ \delta=-\tilde{\mc L}^{-1}\mc R(\gT)+\tilde{\mc L}^{-1}(\tilde{\mc L}-\mc L)\delta
-\tilde{\mc L}^{-1}\mc N(\delta)=:\mc K(\delta) \]
Thus, our goal is to prove that $\mc K$ has a fixed point.

\begin{lemma}
\label{lem:contr}
Let $X:=\{u\in C^1[-1,1]: \|u\|_{W^{1,\infty}}\leq \frac{1}{150}\}$.
Then $\mc K$ has a unique fixed point in $X$.
\end{lemma}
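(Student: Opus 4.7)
\medskip

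The plan is to verify the hypotheses of the Banach fixed point theorem for $\mc K$ on the closed ball $X$ in the Banach space $(C^1[-1,1],\|\cdot\|_{W^{1,\infty}})$. Concretely, I need to show that $\mc K$ maps $X$ into itself and is a strict contraction there. Every ingredient has already been prepared by the preceding estimates; the task reduces to careful bookkeeping.

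First I would bound the ``constant'' term. By Proposition~\ref{prop:Rem} and Corollary~\ref{cor:CL},
\[ \|\tilde{\mc L}^{-1}\mc R(\gT)\|_{W^{1,\infty}}\leq \|\mc R(\gT)\|_{L^\infty}\leq \tfrac{1}{500}. \]
Next, since $(\tilde{\mc L}-\mc L)u=(\tilde p-p)u'+(\tilde q-q)u$, Cauchy--Schwarz in the definition of $\|\cdot\|_{W^{1,\infty}}$ together with Proposition~\ref{prop:fs} gives
\[ \|(\tilde{\mc L}-\mc L)u\|_{L^\infty}\leq \sqrt{\|\tilde p-p\|_{L^\infty}^2+\|\tilde q-q\|_{L^\infty}^2}\,\|u\|_{W^{1,\infty}}\leq \tfrac{\sqrt{34}}{100}\|u\|_{W^{1,\infty}}\leq \tfrac{6}{100}\|u\|_{W^{1,\infty}}, \]
so Corollary~\ref{cor:CL} yields $\|\tilde{\mc L}^{-1}(\tilde{\mc L}-\mc L)u\|_{W^{1,\infty}}\leq \tfrac{6}{100}\|u\|_{W^{1,\infty}}$. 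Finally, Proposition~\ref{prop:CN} (whose hypothesis $\|u\|_{W^{1,\infty}}\leq \tfrac{1}{100}$ is satisfied since $\tfrac{1}{150}<\tfrac{1}{100}$) combined with Corollary~\ref{cor:CL} gives $\|\tilde{\mc L}^{-1}\mc N(u)\|_{W^{1,\infty}}\leq 39\|u\|_{W^{1,\infty}}^2$.

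For the self-mapping property, for $\delta\in X$ I combine the three bounds to obtain
\[ \|\mc K(\delta)\|_{W^{1,\infty}}\leq \tfrac{1}{500}+\tfrac{6}{100}\cdot\tfrac{1}{150}+39\cdot\tfrac{1}{150^2}=\tfrac{15}{7500}+\tfrac{3}{7500}+\tfrac{13}{7500}=\tfrac{31}{7500}\leq \tfrac{50}{7500}=\tfrac{1}{150}, \]
so $\mc K(\delta)\in X$. For the contraction property, the constant term cancels in the difference $\mc K(u)-\mc K(v)$, and the remaining terms are estimated as above together with the Lipschitz bound on $\mc N$ from Proposition~\ref{prop:CN}:
\[ \|\mc K(u)-\mc K(v)\|_{W^{1,\infty}}\leq \left (\tfrac{6}{100}+39\cdot\tfrac{2}{150}\right )\|u-v\|_{W^{1,\infty}}\leq \tfrac{58}{100}\|u-v\|_{W^{1,\infty}}, \]
which is a strict contraction. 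Since $X$ is a closed subset of the Banach space $C^1[-1,1]$ (with the $W^{1,\infty}$ norm), Banach's fixed point theorem furnishes the unique fixed point in $X$.

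There is no serious obstacle remaining: all nontrivial analytic and numerical input has been packaged into Propositions~\ref{prop:Rem}, \ref{prop:CN}, \ref{prop:fs}, \ref{prop:fsbounds} and Corollary~\ref{cor:CL}, and the argument above is essentially arithmetic. The only point to watch is to make sure the numerical constants cooperate, i.e.~that the small parameter $\tfrac{1}{500}$ from $\|\mc R(\gT)\|_{L^\infty}$ and the small operator norm $\tfrac{6}{100}$ from $\tilde{\mc L}-\mc L$, together with the nonlinear factor $39$, fit inside the radius $\tfrac{1}{150}$ of $X$ with room to spare; the computation above confirms this.
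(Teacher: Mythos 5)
Your proof is correct and follows essentially the same route as the paper: you verify the Banach fixed point hypotheses on the ball $X$ by combining Proposition~\ref{prop:Rem}, Proposition~\ref{prop:fs}, Proposition~\ref{prop:CN}, and Corollary~\ref{cor:CL}, then check the arithmetic. The only (harmless) cosmetic difference is that you bound $\|(\tilde{\mc L}-\mc L)u\|_{L^\infty}$ by Cauchy--Schwarz with constant $\tfrac{\sqrt{34}}{100}$, whereas the paper uses the cruder bound $\|p-\tilde p\|_{L^\infty}\|u'\|_{L^\infty}+\|q-\tilde q\|_{L^\infty}\|u\|_{L^\infty}$; both comfortably close the contraction.
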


\begin{proof}
From Propositions \ref{prop:Rem}, \ref{prop:CN}, \ref{prop:fs}, and Corollary \ref{cor:CL} we obtain the estimate
\begin{align*}
\|\mc K(u)\|_{W^{1,\infty}}&\leq \|\mc R(\gT)\|_{L^\infty}
+\|\mc Lu-\tilde{\mc L}u\|_{L^\infty}
+\|\mc N(u)\|_{L^\infty} \\
&\leq \tfrac{1}{500}+\|p-\tilde p\|_{L^\infty}\|u'\|_{L^\infty}+\|q-\tilde q\|_{L^\infty}\|u\|_{L^\infty}+39\|u\|_{W^{1,\infty}}^2 \\
&\leq \tfrac{1}{500}+\tfrac{3}{100}\tfrac{1}{150}+\tfrac{1}{20}\tfrac{1}{150}+39(\tfrac{1}{150})^2 \\
&\leq \tfrac{1}{150}.
\end{align*}
Consequently, $\mc K(u)\in X$ for all $u\in X$.
Furthermore,
\begin{align*}
\|\mc K(u)-\mc K(v)\|_{W^{1,\infty}}&\leq \|(\mc L-\tilde{\mc L})(u-v)\|_{L^\infty}
+\|\mc N(u)-\mc N(v)\|_{L^\infty} \\
&\leq \|p-\tilde p\|_{L^\infty}\|u'-v'\|_{L^\infty}+\|q-\tilde q\|_{L^\infty}\|u-v\|_{L^\infty}+39\tfrac{2}{150}\|u-v\|_{W^{1,\infty}} \\
&\leq \left (\tfrac{3}{100}+\tfrac{1}{20}+\tfrac{78}{150}\right )\|u-v\|_{W^{1,\infty}} \\
&=\tfrac35 \|u-v\|_{W^{1,\infty}}
\end{align*}
for all $u,v\in X$.
Thus, the claim follows from the contraction mapping principle.
\end{proof}

Finally, we obtain the desired approximation to the Skyrmion.

\begin{corollary}
\label{cor:F0}
There exists a $\delta \in C^1[-1,1]$ with $\|\delta\|_{W^{1,\infty}}\leq \tfrac{1}{150}$ such that the Skyrmion is given by
\[ F_0(r)=2\arctan\left (r(1+r)\left (\gT\left (\frac{r-1}{r+1}\right ) +\delta\left (
\frac{r-1}{r+1}\right )\right )\right ). \]
\end{corollary}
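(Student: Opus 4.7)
The plan is to extract the fixed point from Lemma \ref{lem:contr}, unwind the algebra defining $\mc K$ to show it solves the transformed Skyrmion equation $\mc R(g)=0$, and then invoke the uniqueness theorem of McLeod--Troy \cite{McLTro91} to identify the resulting $F$ with $F_0$.

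First, I would apply Lemma \ref{lem:contr} to obtain $\delta\in X\subset C^1[-1,1]$ with $\|\delta\|_{W^{1,\infty}}\leq \frac{1}{150}$ and $\mc K(\delta)=\delta$. Writing out the definition of $\mc K$ and applying $\tilde{\mc L}$ to both sides gives $\tilde{\mc L}\delta=-\mc R(\gT)+(\tilde{\mc L}-\mc L)\delta-\mc N(\delta)$, which rearranges to $\mc L\delta=-\mc R(\gT)-\mc N(\delta)$. By Proposition \ref{prop:CN}, this is precisely the statement $\mc R(\gT+\delta)=0$. Set $g:=\gT+\delta$. Bootstrapping $\delta''=-\gT''-\Phi(x,g(x),g'(x))$ from the equation (noting that Proposition \ref{prop:Rem} together with the bound on $\|\delta\|_{W^{1,\infty}}$ keep $(x,g(x),g'(x))$ inside the domain $\Omega$ where $\Psi(x,g(x))>0$) upgrades $\delta$ to $C^2(-1,1)$.

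Next I would transport $g$ back through the change of variables of Section 2. Define
\[ F(r):=2\arctan\bigl(r(1+r)\,g(\tfrac{r-1}{r+1})\bigr),\qquad r\in[0,\infty). \]
By construction of Eq.~\eqref{eq:skyg} from Eq.~\eqref{eq:sky}, $\mc R(g)=0$ on $(-1,1)$ is equivalent to $F$ solving the static Skyrmion equation \eqref{eq:sky} on $(0,\infty)$. The boundary values are immediate: at $r=0$ we get $F(0)=2\arctan(0)=0$, and as $r\to\infty$ the argument tends to $r^2g(1)$, with $g(1)\geq \tfrac{11}{20}+\tfrac{1}{100}-\tfrac{1}{150}>0$ by Proposition \ref{prop:Rem}, so $F(r)\to 2\arctan(+\infty)=\pi$. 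The regularity conditions \eqref{eq:RegRem} built into $\gT$ (via \eqref{eq:Regphi}) are automatic for $\gT$; they transfer to $g$ provided they hold for $\delta$, which follows by evaluating the equation $\mc R(\gT+\delta)=0$ at $x=\pm 1$ and using \eqref{eq:Psi1boundary}, ensuring $F$ extends smoothly to $r=0$ and has the required $\pi+O(r^{-2})$ decay.

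Finally, the uniqueness of the Skyrmion (cited in Section 2 from \cite{McLTro91}) identifies this $F$ with $F_0$, which yields the claimed formula. The only nontrivial point in assembling the pieces is verifying that the regularity/boundary behavior at $r=0$ and $r=\infty$ matches what is needed to invoke uniqueness; this reduces to checking \eqref{eq:RegRem} and the sign of $g(1)$, both of which follow from the bounds already established. Everything else is routine unwinding of the definitions, since the heavy lifting (the contraction estimate, the bounds on $\tilde{\mc L}^{-1}$, and the nonlinear estimates) has been done in the preceding propositions.
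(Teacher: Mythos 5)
Your proof is correct and takes essentially the same approach as the paper: extract the fixed point from Lemma \ref{lem:contr}, unwind $\mc K(\delta)=\delta$ to get $\mc R(\gT+\delta)=0$, transport back through the change of variables, check the boundary values $F(0)=0$ and $F(r)\to\pi$, and invoke McLeod--Troy uniqueness. The paper's own proof is considerably terser (it compresses your regularity and boundary analysis into a citation of ``standard ODE regularity theory''), but the underlying argument is identical; your filling-in of the endpoint regularity via \eqref{eq:Psi1boundary} and the positivity check $g(1)>0$ is the kind of detail the paper leaves implicit.
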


\begin{proof}
By construction, Lemma \ref{lem:contr}, and standard ODE regularity theory, there exists a $\delta$ with the stated properties such that $F_0$ is a smooth solution to the original Skyrmion equation \eqref{eq:sky}.
Obviously, we have $F_0(0)=0$ and from $\gT(x)\in [\frac12,\frac32]$ for all $x\in [-1,1]$, see Proposition \ref{prop:Rem}, we infer $\lim_{r\to\infty}F_0(r)=\pi$.
Since the Skyrmion is the unique solution of Eq.~\eqref{eq:sky} with these boundary values \cite{McLTro91}, the claim follows.
\end{proof}

\subsection{Spectral stability}
Recall that the linear stability of the Skyrmion is governed by the Schr\"odinger operator
\[ \mc Af(r)=-f''(r)+\frac{2}{r^2}f(r)+V(r)f(r) \]
on $L^2(0,\infty)$, 
where the potential is given by
\[ V=-4a^2\frac{1+3a^2+3a^4}{(1+2a^2)^2}, \qquad a(r)=\frac{\sin F_0(r)}{r}. \]
From Corollary \ref{cor:F0} and the identity $\sin(2\arctan y)=\frac{2y}{1+y^2}$ we obtain
\[ a(r)=\frac{2(1+r)\left [\gT \left(\frac{r-1}{r+1}\right )+\delta\left (\frac{r-1}{r+1}\right )\right ]}
{1+r^2(1+r)^2 \left [\gT \left(\frac{r-1}{r+1}\right )+\delta\left (\frac{r-1}{r+1}\right )\right ]^2}. \]
Furthermore, from $\|\delta\|_{L^\infty}\leq \frac{1}{150}$ and $\gT(x)\in [\frac12,\frac32]$ for all $x\in [-1,1]$, see Proposition \ref{prop:Rem}, we infer the bounds
\begin{align*}
 |a(r)|&\leq \frac{2(1+r)\left [\gT \left(\frac{r-1}{r+1}\right )+\frac{1}{150}\right ]}
{1+r^2(1+r)^2 \left [\gT \left(\frac{r-1}{r+1}\right )-\frac{1}{150}\right ]^2}=:A(r) \\
|a(r)|&\geq \frac{2(1+r)\left [\gT \left(\frac{r-1}{r+1}\right )-\frac{1}{150}\right ]}
{1+r^2(1+r)^2 \left [\gT \left(\frac{r-1}{r+1}\right )+\frac{1}{150}\right ]^2}=:B(r)
\end{align*}
Consequently, we obtain the estimate
\[ |V|\leq 4A^2\frac{1+3A^2+3A^4}{(1+2B^2)^2}. \]

\begin{lemma}
\label{lem:intV}
We have the bound
\[ \int_0^\infty r^7 |V(r)|^4\d r\leq 130.\]
\end{lemma}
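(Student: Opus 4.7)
The plan is to convert the integral over $r\in(0,\infty)$ into an integral over $x\in(-1,1)$ using the change of variables $r=\frac{1+x}{1-x}$ that has been used throughout the paper, and then to bound the resulting rational expression by the same blend of rational interval arithmetic and Chebyshev re-expansion that was employed in Propositions~\ref{prop:Rem}, \ref{prop:fs}, and \ref{prop:fsbounds}.

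First, I compute the Jacobian: $r=\frac{1+x}{1-x}$ gives $dr=\frac{2\,\d x}{(1-x)^2}$ and
\[ r^7\,\d r=\frac{2(1+x)^7}{(1-x)^9}\,\d x. \]
Second, I rewrite $A(r)$ and $B(r)$ as rational functions of $x$ and $\gT(x)$ by using $1+r=\frac{2}{1-x}$ and $r^2(1+r)^2=\frac{4(1+x)^2}{(1-x)^4}$. This gives, for instance,
\[ A(r)=\frac{4(1-x)^3\bigl[\gT(x)+\tfrac{1}{150}\bigr]}{(1-x)^4+4(1+x)^2\bigl[\gT(x)-\tfrac{1}{150}\bigr]^2},\qquad B(r)=\frac{4(1-x)^3\bigl[\gT(x)-\tfrac{1}{150}\bigr]}{(1-x)^4+4(1+x)^2\bigl[\gT(x)+\tfrac{1}{150}\bigr]^2}. \]
Combined with the pointwise bound $|V|\le 4A^2\frac{1+3A^2+3A^4}{(1+2B^2)^2}$, the integrand $r^7|V(r)|^4$ becomes a (large but completely explicit) rational function of $x$ and $\gT(x)$, with rational coefficients once one uses the closed-form expression for $\gT$ from Table~\ref{tab:c}.

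Third, I estimate this rational function. Both denominators are strictly positive on $[-1,1]$: for $A$ the denominator is at least $(1-x)^4$, and for $B$ it is bounded below by $1$ after using $\gT(x)\in[\tfrac{11}{20},\tfrac{21}{20}]$ from Proposition~\ref{prop:Rem}. After cancelling the factors of $(1-x)$ between the numerator of $A^2$ and the Jacobian $\frac{(1+x)^7}{(1-x)^9}$, I get an explicit rational function $H(x)$ with rational coefficients and no poles in $[-1,1]$, and it remains to show $\int_{-1}^1 H(x)\,\d x\le 130$. I straighten the denominator of $H$ exactly as in the proof of Proposition~\ref{prop:Rem} by finding a short Chebyshev approximant $R(x)$ to $1/\text{denom}(H)$, write $H=\tfrac{RN}{RD}$, and then re-expand the polynomial $RN$ in the Chebyshev basis. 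The integral of a Chebyshev polynomial $T_n$ over $[-1,1]$ is $0$ for $n$ odd and $\frac{2}{1-n^2}$ for $n$ even, so once I have $RN=\sum a_n T_n$ I obtain a closed form for $\int_{-1}^1 RN$; combining with the lower bound on $RD$ (proved via Lemma~\ref{lem:estf} on a fine grid) yields an explicit upper bound on $\int_{-1}^1 H\,\d x$.

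The main obstacle is arithmetic rather than conceptual: the integrand is an eighth power of a quotient of polynomials of moderate degree in $\gT$ (degree $\le 42$), so the polynomial $RN$ will have rather high degree and many terms. To keep the interval-arithmetic overestimate small, it is essential to keep everything in the canonical form described at the start of Section~4, to pull out powers of $(1-x^2)$ before expanding, and probably to subdivide $[-1,1]$ into many subintervals as in Remark~\ref{rem:interval}. None of this is new: it is exactly the same machinery that produced the bound $\|\mc R(\gT)\|_{L^\infty}\le\frac{1}{500}$, and the target $130$ is generous enough (numerical experiments on $\gT$ alone should give a value around $100$) that the $\frac{1}{150}$ error from $\delta$ and the interval-arithmetic slack comfortably fit. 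Once $\int_0^\infty r^7|V|^4\,\d r\le 130$ is established, the GGMT constant gives $\nu(V)=3^{-7}\frac{3^3\Gamma(8)}{4^4 \Gamma(4)^2}\cdot 130<1$, finishing the argument.
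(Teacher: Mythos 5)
Your proposal follows essentially the same route as the paper: change variables $r=\frac{1+x}{1-x}$ to turn the integral into $\int_{-1}^1 P(x)/Q(x)\,\d x$, straighten the denominator with a Chebyshev approximant $R$ to $1/Q$, bound $\min_{[-1,1]}RQ$ from below via Lemma~\ref{lem:estf}, and integrate the polynomial $RP$ exactly. Your added details (the explicit forms of $A$ and $B$ in $x$, the Jacobian, and the closed-form integral of $T_n$) are correct and merely flesh out what the paper leaves implicit, so this is the same argument.
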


\begin{proof}
By employing the techniques introduced before, it is straightforward to obtain the stated estimate. More precisely, we introduce the new integration variable $x\in [-1,1]$, given by $r=\frac{1+x}{1-x}$,
and write
\[ \int_0^\infty r^7 |V(r)|^4\d r\leq \int_0^\infty r^7 \left [4A(r)^2\frac{1+3A(r)^2+3A(r)^4}{(1+2B(r)^2)^2}\right ]^4\d r=\int_{-1}^1 \frac{P(x)}{Q(x)}\d x, \]
where $P$ and $Q$ are polynomials with rational coefficients.
As before, by a pseudospectral method, we construct a truncated Chebyshev expansion $R(x)$ of $1/Q(x)$.
Next, by a Chebyshev re-expansion we obtain an estimate for $\|(RQ)'\|_{L^\infty}$
and Lemma \ref{lem:estf} yields a lower bound on $\min_{[-1,1]}RQ$ which is close to $1$.
From this we find
\[ \int_{-1}^1 \frac{P(x)}{Q(x)}\d x=\int_{-1}^1 \frac{R(x)P(x)}{R(x)Q(x)}\d x
\leq \frac{1}{\min_{[-1,1]}RQ}\int_{-1}^1 R(x)P(x)\d x \]
and the last integral can be evaluated explicitly since the integrand is a polynomial.
\end{proof}

We can now conclude the main result.

\begin{proof}[Proof of Theorem \ref{thm:main}]
From Lemma \ref{lem:intV} we obtain
\[ 3^{-7}\frac{3^3\Gamma(8)}{4^4\Gamma(4)^2}\int_0^\infty r^7 |V(r)|^4\d r\leq \frac{2275}{2592}<1.  \]
Consequently, the GGMT bound, see Appendix \ref{app:GGMT}, implies that $\mc A$ has no eigenvalues.
\end{proof}

\begin{appendix}

\section{The GGMT bound}
\label{app:GGMT}

\noindent Consider $H=-\Delta + V$ in $\R^3$ where $V\in L^1\cap L^\infty(\R^3)$ (say) and radial. 
The GGMT bound~\cite{GlaGroMarThi75} is as follows (see also~\cite{GlaGroMar78}). We restrict ourselves to a smaller range of $p$ than necessary since
it is technically easier and sufficient.  

\begin{theorem}\label{thm:GGMT} 
Write $V=V_+ - V_-$ where $V_{\pm}\ge0$. 
For any $\f32\le p<\I$, if 
\EQ{\label{eq:GGMT}
\f{(p-1)^{p-1} \Gamma(2p)}{p^p \Gamma^2(p)} \int_{0}^\infty r^{2p-1} V_{-}^p(r)\, dr <1
}
then $H$ has no negative eigenvalues. Furthermore, zero energy is neither an eigenvalue nor a 
resonance. 
\end{theorem}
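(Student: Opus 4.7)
The plan is a Birman–Schwinger reduction followed by a Schatten-norm estimate for the resulting integral operator. First I would reduce to $V=-V_-$ by the min–max principle: since $V \geq -V_-$ as multiplication operators, $N(-\Delta + V) \leq N(-\Delta - V_-)$, where $N(\cdot)$ counts negative eigenvalues. Next, by the radial decomposition, in the $\ell$-th angular momentum sector the operator $-\Delta-V_-$ acts as $-\partial_r^2+\ell(\ell+1)/r^2-V_-(r)$ on $L^2((0,\infty),dr)$ with Dirichlet condition at $0$. Since the centrifugal term is nonnegative and monotone in $\ell$, min–max gives $N_\ell \leq N_0$, so it suffices to rule out negative eigenvalues of $H_0:=-\partial_r^2-V_-(r)$. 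The Birman–Schwinger principle identifies $N_0$ with the number of eigenvalues $\geq 1$ of the positive compact operator $K$ with kernel
\[ K(r,r'):=V_-(r)^{1/2}\,\min(r,r')\,V_-(r')^{1/2}, \]
since $\min(r,r')$ is the zero-energy Green's function of $-\partial_r^2$ on the half-line with Dirichlet condition at $0$. Hence $N_0 \leq \operatorname{tr}(K^p)$ for every $p \geq 1$.

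The heart of the proof is the estimate of $\operatorname{tr}(K^p)$. The key observation is that the logarithmic substitution $r=e^s$, implemented by the unitary $(Uf)(s):=e^{s/2}f(e^s)$ from $L^2((0,\infty),dr)$ onto $L^2(\R,ds)$, converts $K$ into a one-dimensional Birman–Schwinger-type operator:
\[ UKU^{-1} = W^{1/2}\bigl(-\partial_s^2+\tfrac14\bigr)^{-1}W^{1/2},\qquad W(s):=e^{2s}V_-(e^s), \]
because $e^{s/2}\min(e^s,e^{s'})e^{s'/2}=e^{s+s'}e^{-|s-s'|/2}$ and $e^{-|s|/2}$ is precisely the integral kernel of $(-\partial_s^2+\tfrac14)^{-1}$ on $L^2(\R)$. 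Writing $UKU^{-1}=A^*A$ with $A=(-\partial_s^2+\tfrac14)^{-1/2}W^{1/2}$, the trace $\operatorname{tr}((UKU^{-1})^p)=\|A\|_{\mathfrak S^{2p}}^{2p}$ can be bounded via Kato–Seiler–Simon (or a sharper rearrangement/duality argument exploiting the convolution structure) by a product of $L^{2p}$-norms of the Fourier multiplier $(1/4+\xi^2)^{-1/2}$ and of $W^{1/2}$. Undoing the substitution converts $\|W\|_{L^p(\R)}^p$ back to $\int_0^\infty r^{2p-1}V_-(r)^p\,dr$, and optimizing yields
\[ \operatorname{tr}(K^p) \leq \frac{(p-1)^{p-1}\Gamma(2p)}{p^p\Gamma(p)^2}\int_0^\infty r^{2p-1}V_-(r)^p\,dr. \]
Under the hypothesis the right-hand side is strictly less than $1$, forcing $N_0=0$ and hence the absence of negative eigenvalues of $H$.

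To exclude zero energy as an eigenvalue, I would use a perturbative argument: the strict inequality in the GGMT hypothesis is preserved under the replacement $V_-\mapsto (1+\varepsilon)V_-$ for sufficiently small $\varepsilon>0$, so the enlarged operator $-\Delta-(1+\varepsilon)V_-$ still has no negative eigenvalues; were $0$ a bound state of $H_0$ with eigenfunction $\psi$, then testing the quadratic form of the enlarged operator against $\psi$ would give a negative value, producing a negative eigenvalue by min–max and contradicting GGMT. A zero-energy resonance is excluded by a separate analysis of the zero-energy equation, using that its recessive solution at infinity decays like $1/r$ and therefore cannot simultaneously match the solution regular at the origin unless both vanish identically. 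The main obstacle will be attaining the \emph{sharp} constant $(p-1)^{p-1}\Gamma(2p)/(p^p\Gamma(p)^2)$: the naive Kato–Seiler–Simon inequality produces a strictly larger constant, and recovering the GGMT value demands either a delicate duality/rearrangement analysis exploiting the explicit form of the convolution kernel $e^{-|s|/2}$, or, for integer $p$, a direct combinatorial evaluation of $\operatorname{tr}(K^p)=\int\prod V_-(r_i)\prod\min(r_i,r_{i+1})\,dr$ using the cyclic symmetry to reduce to an ordered-simplex integral, followed by interpolation to noninteger $p$.
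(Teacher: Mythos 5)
Your proposal follows the Birman--Schwinger/Schatten-norm route, which is genuinely different from the paper's argument. The paper works directly with the quadratic form: from $\langle H\psi,\psi\rangle<0$ it applies H\"older and the sharp radial inequality $\|\nabla\psi\|_2^2\geq\mu_q\|r^{(q-3)/(2q)}\psi\|_{2q}^2$, whose optimal constant $\mu_q$ and sech-type extremizer are found by passing to the line (via the same logarithmic substitution you use) and running concentration--compactness. Your Birman--Schwinger reduction, the reduction to $\ell=0$, and the unitary conjugation $UKU^{-1}=W^{1/2}(-\partial_s^2+\tfrac14)^{-1}W^{1/2}$ with $W(s)=e^{2s}V_-(e^s)$ are all correct, as is $N_0\leq\operatorname{tr}(K^p)$.

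The gap is in the step claiming $\operatorname{tr}(K^p)\leq C_p\int_0^\infty r^{2p-1}V_-^p\,dr$ with $C_p=(p-1)^{p-1}\Gamma(2p)/(p^p\Gamma(p)^2)$. After the substitution this reads $\operatorname{tr}(K^p)\leq C_p\|W\|_{L^p(\R)}^p$, and this inequality is \emph{false}, so no refinement of Kato--Seiler--Simon, rearrangement, or combinatorial evaluation of the cyclic trace integral can recover it. Two ways to see this. For $p=2$, Plancherel gives
\[
\operatorname{tr}(K^2)=\frac{1}{2\pi}\int_\R\frac{2}{1+\xi^2}\,|\hat W(\xi)|^2\,d\xi,
\]
so $\sup_{0\leq W\in L^2}\operatorname{tr}(K^2)/\|W\|_2^2=2$ (approached by spreading $W$ out), whereas $C_2=\tfrac32<2$. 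More structurally, take $W$ to be the GGMT extremizer, i.e.\ $W=f^{2q-2}$ where $-f''+\tfrac14 f=f^{2q-1}$; then $C_p\|W\|_p^p=1$ and $\phi=W^{1/2}f$ satisfies $K\phi=\phi$, so $\lambda_1(K)=1$, while $K$ is positive and compact with infinitely many further positive eigenvalues, giving $\operatorname{tr}(K^p)=1+\sum_{j\geq2}\lambda_j^p>1=C_p\|W\|_p^p$. The trace is simply the wrong quantity: the contribution of the eigenvalues of $K$ strictly below $1$ makes $\operatorname{tr}(K^p)$ exceed $C_p\|W\|_p^p$ exactly at the extremizer. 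What the Birman--Schwinger picture does reduce the theorem to is the operator-norm statement $\|K\|<1$, but proving that is equivalent to establishing the sharp inequality $\|\phi'\|_2^2+\tfrac14\|\phi\|_2^2\geq(4\pi)^{-1/p}\mu_q\|\phi\|_{2q}^2$ --- i.e.\ exactly the variational problem the paper solves. Separately, your exclusion of a zero resonance by ``the two solutions cannot match'' is not correct as stated: a resonance \emph{is} precisely such a matching, and the paper instead rules it out by perturbing to $H_\varepsilon=H-\varepsilon e^{-|x|^2}$ and evaluating the quadratic form on truncations $\psi_R=\chi(\cdot/R)\psi$ (needed since $\psi\notin L^2$), then sending $R\to\infty$.
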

\begin{proof}
Suppose $H$ has negative spectrum. Then there exists a ground state, $H\psi=E\psi$ with $\psi\in H^2(\R^3)$, $\|\psi\|_2=1$, 
and radial, $E<0$. 
So
\EQ{\label{eq:0}
\LR{ H\psi,\psi} <0
}
which implies in particular that for any $\al\in \R$, 
\EQ{ \label{eq:V-}
\int_{\R^3}  |\nabla\psi(x)|^2\, dx &< \int_{\R^3}  V_-(x) |\psi(x)|^2\, dx \\
&\leq \| r^\al V_-\|_p \| r^{-\f{\al}{2}} \psi\|_{2q}^2 
}
by H\"older, $\f1p+\f1q=1$ (which is only meaningful if the right-hand side is finite).  We set $p(2-\al)=3$, $q(1+\al)=3$, which requires that $-1\leq \al\leq 2$. 
In fact, $\I\geq p\ge \f32$ means precisely that $2\geq \al\geq0$, and $1\le q\le 3$. 
Set
\EQ{\label{eq:mu q}
\mu_q := \inf_{\psi\in H^1_{\mathrm rad}\setminus\{0\}} \f{ \| \nabla \psi\|_2^2}{\| r^{\f{q-3}{2q}} \psi \|_{2q}^2}
}
Note that the denominator here is always a positive finite number. Indeed, it suffices to check this for $q=1$ and $q=3$, respectively. 
This amounts to  
\[
\| r^{-1} \psi\|_2 + \|\psi\|_6 \leq C\|\nabla \psi\|_2 \qquad \forall\; \psi\in H^1(\R^3)
\]
which is true by the Hardy and Sobolev inequalities.  By Lemma~\ref{lem:calV}, $\mu_q>0$ and its value can be explicitly computed.  Thus,  by \eqref{eq:V-},
\[
\|\nabla\psi\|_2^2 \leq \mu_q^{-1} \| r^\al V_-\|_p \|\nabla\psi\|_2^2
\]
which is a contradiction of $\mu_q^{-1} \| r^\al V_-\|_p<1$, the latter being precisely condition~\eqref{eq:GGMT}. 

It remains to discuss the case where $H$ has no negative spectrum but a zero eigenvalue or a zero resonance. If $0$ is an eigenvalue, then we have a solution $\psi\in H^2$ of
\[
-\Delta \psi = V\psi
\]
which means that 
\[
\psi(x) = - \f{1}{4\pi} \int_{\R^3} \f{V(y)\psi(y)}{|x-y|}\, dy
\]
If $\int V\psi\ne0$, then $\psi(x)\simeq |x|^{-1}$ for large $x$, which is not $L^2$.  So $\int V\psi=0$ and $\psi(x)=O(|x|^{-2})$ as $x\to\I$. 
One has $\LR{H\psi,\psi}=0$ instead of~\eqref{eq:0}. 
Replacing $H$ with $H_\eps=H-\eps e^{-|x|^2}$ for small $\eps>0$ we conclude that 
\[
\LR{ H_\eps \psi,\psi}<0
\] and
$H_\eps$ therefore has negative spectrum,
while~\eqref{eq:GGMT} still holds for small~$\eps$. By the previous case, this gives a contradiction. 

If $0$ is a resonance, this means that there is a solution $\psi \in H^2_{\mathrm{loc}}(\R^3)$ with $\psi(x)\simeq |x|^{-1}$ as $x\to\I$
(and by the reasoning above this holds if and only if  $\int V\psi\ne 0$).   In particular, since $\nabla \psi \in L^2$ and since $\int V\psi^2$ is absolutely convergent, we still arrive
at the conclusion that  $\LR{H\psi,\psi}=0$. Substituting $H_\eps$ for $H$ as above again gives a contradiction. 
To be precise, we evaluate the quadratic form of $H_\eps$ on the functions $$\psi_R(x):=\chi(x/R) \psi(x)$$
where $\chi$ is a standard bump function of compact support and equal to~$1$ on the unit ball. Sending $R\to\I$ then shows that 
$H_\eps$ has negative spectrum. 
\end{proof}

The following lemma establishes the constant $\mu_q$ in the previous proof.
The variational problem \eqref{eq:mu q} is invariant under a two-dimensional group of  symmetries: $S_{\xi,\eta}(\psi)(x) =  e^{\xi}\psi(e^{\eta} x)$, $\xi,\eta\in\R$. 
The scaling of the independent variable leads to a loss of compactness. To make it easier to apply the standard 
methods of concentration-compactness, we employ the same change of variables as in~\cite{GlaGroMarThi75}. 

\begin{lemma}\label{lem:calV}
For $1< q\le 3$ we have 
\[
\mu_q = \f{p}{p-1} \Big[ 4\pi \f{(p-1)\Gamma^2(p)}{\Gamma(2p)}\Big]^{\f1p}, 
\]
with $p$ the dual exponent to $q$. Equality in \eqref{eq:mu q} is attained by the radial functions
\[
\psi_q(x) = \f{a}{\big(1+b r^{\f{1}{p-1}}\big)^{p-1}}
\]
where $a,b>0$ are arbitrary. 
\end{lemma}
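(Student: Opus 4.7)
The plan is to reduce the variational problem \eqref{eq:mu q} to a one-dimensional, translation-invariant problem on $\R$, then solve it by integrating the Euler--Lagrange equation explicitly and evaluating the resulting integrals in closed form.

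First I would set $\phi(r):=r\psi(r)$, which vanishes at $r=0$. A short integration by parts identifies
\[
\|\nabla \psi\|_2^2 = 4\pi\int_0^\infty \phi'(r)^2\,dr,\qquad
\|r^{(q-3)/(2q)}\psi\|_{2q}^{2q} = 4\pi\int_0^\infty \phi(r)^{2q}\,r^{-(q+1)}\,dr,
\]
so that $\mu_q = (4\pi)^{1/p}\,\tilde\mu_q$, where $\tilde\mu_q$ is the infimum of the corresponding quotient for $\phi$. Next, to exploit the scaling invariance noted in the text, I would substitute $t=\log r$ and write $\phi(r) = r^{1/2}u(t)$. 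The exponent $1/2$ is chosen so that both integrals become translation-invariant on $\R$: one computes
\[
\int_0^\infty \phi'(r)^2\,dr = \int_\R \bigl(u'(t)+\tfrac12 u(t)\bigr)^2\,dt,\qquad
\int_0^\infty \phi(r)^{2q} r^{-(q+1)}\,dr = \int_\R u(t)^{2q}\,dt.
\]
The two-parameter symmetry group $S_{\xi,\eta}$ is now collapsed to translation in $t$ together with homogeneity in $u$. Normalizing $\int u^{2q}\,dt = 1$ and translating so that $u(0)=\|u\|_\infty$ produces a minimizing sequence bounded in $H^1(\R)$, from which a standard concentration-compactness argument extracts a nontrivial minimizer $u_*\in H^1(\R)$. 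Establishing this compactness is the main obstacle and the reason we changed variables: in the original radial formulation the dilation symmetry destroys compactness, while after the substitution only the (manageable) translation group remains.

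The Euler--Lagrange equation for $u_*$ reads $-u_*'' + \tfrac14 u_* = \lambda q\,u_*^{2q-1}$. Multiplying by $u_*'$ and using $u_*,u_*'\to 0$ at $\pm\infty$ gives the first integral $(u_*')^2 = \tfrac14 u_*^2 - \lambda u_*^{2q}$. Separating variables with the change $v=(u_*/u_*(0))^{q-1}$ and using $1/(q-1)=p-1$, elementary integration yields
\[
u_*(t) = \frac{u_*(0)}{\cosh\bigl(t/(2(p-1))\bigr)^{p-1}}
\]
up to translation. Converting back via $\psi = r^{-1/2}u_*(\log r)$ and absorbing the remaining dilation freedom into a second parameter produces exactly the stated family $\psi_q(r)=a/(1+b r^{1/(p-1)})^{p-1}$.

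Finally, to pin down the constant, I would substitute $\psi_q$ directly into the Rayleigh quotient. With the substitution $s=br^{1/(p-1)}$ both integrals reduce to Beta integrals; one checks that
\[
\int_0^\infty \frac{s^{p-1}}{(1+s)^{2p}}\,ds = B(p,p) = \frac{\Gamma(p)^2}{\Gamma(2p)},\qquad
\int_0^\infty \frac{s^{p}}{(1+s)^{2p}}\,ds = \frac{p}{p-1}\,\frac{\Gamma(p)^2}{\Gamma(2p)}.
\]
After the $a,b$-dependence cancels by scaling and one collects the $p,q$ exponents (using $1-1/q=1/p$), the quotient simplifies to $\tfrac{p}{p-1}\bigl[4\pi(p-1)\Gamma(p)^2/\Gamma(2p)\bigr]^{1/p}$, which is the claimed value of $\mu_q$.
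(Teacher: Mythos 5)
Your proposal follows essentially the same route as the paper: the two-step substitution $\phi=r\psi$, $\phi=r^{1/2}u(\log r)$ is exactly the paper's $\varphi(x)=\sqrt{r}\psi(r)$, $r=e^x$ (your $u$ \emph{is} the paper's $\varphi$), and you proceed identically through concentration-compactness, the Euler--Lagrange equation, and the first integral/phase-portrait argument to obtain the explicit cosh profile. The only divergence is cosmetic: at the end the paper first uses the identity $\int(\varphi'^2+\tfrac14\varphi^2)=\int\varphi^{2q}$ at the minimizer to reduce $\mu_q^p$ to $4\pi\int f^{2q}$ and evaluates a single $\int(\cosh)^{-b}$ integral, whereas you substitute $\psi_q$ back into the original Rayleigh quotient in the $r$-variable and evaluate two Beta integrals --- both computations are straightforwardly equivalent.
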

\begin{proof}
We begin with the following claim
\EQ{\label{eq:claim}
\mu_q = \inf_{\fy\in H^1(\R),\fy\ne0} (4\pi)^{\f1p}\f{ \int_{-\I}^\I \big( \fy'^2 +\f14 \fy^2)(x)\, dx}{\Big( \int_{-\I}^\I \fy^{2q}(x)\, dx\Big)^{\f1q}} 
}
To prove it, first note that we may  take the infimum in \eqref{eq:mu q} over  radial functions $\psi \in C^1(\R^3)$ of compact support. 
For this we use that $1\le  q\le 3$ to control the denominator by the $\dot H^1(\R^3)$ norm. 
Then set $\fy(x)= \sqrt{r}\psi(r)$, $r=e^x$ and calculate
\EQ{ \nonumber 
\| \nabla \psi\|_2^2 &= 4\pi \int_{-\I}^\I \big( \fy'^2 +\f14 \fy^2)(x)\, dx \\
 \| r^{\f{q-3}{2q}} \psi \|_{2q}^2 &= (4\pi)^{\f1q} \Big(  \int_{-\I}^\I  \fy^{2q}(x)\, dx\Big)^{\f1q}
}
Note that $\fy(x)\to0$ as $x\to\pm\I$ (exponentially as $x\to-\I$, and identically vanishing for large $x>0$). This gives~\eqref{eq:claim} 
by density.  

Let $\fy_n\in H^1(\R)$ be a minimizing sequence for~\eqref{eq:claim} with $\|\fy_n\|_{2q}=1$. Clearly, $\fy_n$ is bounded in $H^1(\R)$ and by Sobolev embedding, it follows that $\mu_q>0$. 
By the concentration compactness method, see Proposition~3.1
in~\cite{HmiKer05}, there exist $V_j\in H^1(\R)$ for all $j\ge 1$, and $x_{j,n}\in \R$ such that  (everything up to passing to subsequences)
\EQ{\nn 
|x_{j,n} - x_{k,n} | &\to \I   \text{\ \ \ \ for all $j\ne k$ as\ \ } n\to\I \\
\fy_n &= \sum_{j=1}^\ell V_j(\cdot-x_{j,n}) + g_{n,\ell} \text{\ \ \ \ for all\ \ } \ell\ge1
}
where $$\limsup_{n\to\I} \|g_{n,\ell}\|_p \to 0$$ as $\ell\to\I$ for any $2<p<\I$. Moreover, 
\EQ{
\| \fy_n'\|_2^2 &=  \sum_{j=1}^\ell  \| V_j'\|_2^2 + \|g_{n,\ell}' \|_2^2 + o(1) \\
\| \fy_n \|_2^2 &=  \sum_{j=1}^\ell  \| V_j \|_2^2 + \|g_{n,\ell} \|_2^2 +o(1)
}
as $n\to\I$, and 
\[
1= \|\fy_n \|_{2q}^{2q} = \sum_{j=1}^\ell  \|V_j\|_{2q}^{2q} + o(1)
\]
as $n,\ell\to\I$.  To be precise, for any $\eps>0$ we may find $\ell$ such that 
\[
  \Big | 1- \sum_{j=1}^\ell  \|V_j\|_{2q}^{2q} \Big| <  \eps
\]
We have
\EQ{ \label{eq:part}
\| \fy_n'\|_2^2 + \f14 \|\fy_n\|_2^2 &\ge (4\pi)^{-\f1p}\mu_q \big( \sum_{j=1}^\ell \| V_j\|_{2q}^{2} + \|g_{n,\ell}\|_{2q}^2\big) -o(1)  \\
&\ge  (4\pi)^{-\f1p}\mu_q \big( \sum_{j=1}^\ell \| V_j\|_{2q}^{2q} + \|g_{n,\ell}\|_{2q}^{2q}\big)^{\f1q} -o(1) 
}
 If there were two nonzero profiles $V_j$, or if 
 $$
\limsup_{n\to\I} \|g_{n,\ell}\|_{2q} \not\to0
 $$
 as $\ell\to\I$, 
 then there exists $\delta>0$ (since $q>1$) so that 
 \EQ{ \nn
\| \fy_n'\|_2^2 + \f14 \|\fy_n\|_2^2 &\ge  (4\pi)^{-\f1p}\mu_q (1+\delta ) -o(1) 
}
as $n\to\I$, contradicting that $\fy_n$ is a minimizing sequence. So up to a translation, we may assume that $\fy_n$ is compact in $L^{2q}(\R)$ and
in fact that $\fy_n\to \fy_\I$ in $L^{2q}(\R)$. In particular, $\|\fy_\I \|_{2q}=1$.  Furthermore, we have the weak convergence $\fy_n'\rightharpoonup \fy_\I'$, $\fy_n\rightharpoonup \fy_\I$ 
in $L^2(\R)$ which implies that 
\[
\| \fy_\I'\|_2^2 + \f14 \|\fy_\I\|_2^2 \le \liminf_{n\to\I} \big(\| \fy_n'\|_2^2 + \f14 \|\fy_n\|_2^2\big) = (4\pi)^{-\f1p}\mu_q
\]
In conclusion, $\fy_n\to\fy_\I$ strongly in $H^1$, and $\fy_\I\in H^1(\R)\setminus\{0\}$  is a minimizer for $\mu_q$.    Passing absolute values onto $\fy_n$  we may assume that $\fy_\I\ge0$. 

The associated Euler-Lagrange equation is
\[
-2\fy_\I''+\f12 \fy_\I = k \fy_\I^{2q-1}
\]
first in the weak sense, but then in the classical one by basic regularity. Furthermore, $\fy>0$, $\fy_\I\in C^\I(\R)$, and $k>0$. Since $q>1$ we may
absorb the constant which leads to an exponentially decaying, positive smooth solution to the equation 
\[
-f''(x) +\f14 f(x) = f^{2q-1}(x)
\] 
By the phase portrait, such an $f$ is unique up to translation in $x$. It is given by the homoclinic orbit emanating from the origin and encircling the positive equilibrium. 
This homoclinic  orbit (and its reflection together with the origin) make up the algebraic curve
\EQ{\label{eq:hom}
-  f'^2 + \f14 f^2 = \f{1}{q} f^{2q}
}
The explicit form of the solution is obtained by integrating up the first order ODE \eqref{eq:hom} which leads to
\EQ{\label{eq:expl}
f(x-x_0) = \Big( \f{q}{4} \Big)^{\f{1}{2(q-1)}} \big( \cosh((q-1)x/2)\big)^{-\f{1}{q-1}}
}
where $x_0\in\R$.  Finally, 
\[
\mu_q^p = 4\pi \int_{-\I}^\I f(x)^{2q}\, dx
\]
with $f$ as on the right-hand side of \eqref{eq:expl}. Thus,
\EQ{\label{eq:muq cosh}
\mu_q^p = 4\pi \Big( \f{q}{4} \Big)^{\f{q}{q-1}} \int_{-\I}^\I  \big( \cosh((q-1)x/2)\big)^{-\f{2q}{q-1}}   \, dx
}
To proceed, we recall that for any $b>0$
\EQ{\nn
\int_{-\I}^\I (\cosh x)^{-b}\, dx &= 2^b \int_0^\I \f{u^{b-1}}{(1+u^2)^b}\, du = \f{\sqrt{\pi}\,\Gamma(b/2)}{\Gamma((b+1)/2)}
}
Inserting this into \eqref{eq:muq cosh} yields
\[
\mu_q^p = 4\pi \Big( \f{q}{4} \Big)^{p}\f{2}{q-1} \f{\Gamma(p)}{\Gamma(p+\f12)}
\]
Using $\Gamma(p)\Gamma(p+\f12)=2^{1-2p}\sqrt{\pi}\,\Gamma(2p)$ this turns into
\[
\mu_q^p = 4\pi \f{q^p}{q-1} \f{\Gamma(p)^2}{\Gamma(2p )}  = 4\pi \Big(  \f{p}{p-1} \Big)^p (p-1) \f{\Gamma(p)^2}{\Gamma(2p )}  
\]
which is what the lemma set out to prove. The minimizers are obtained by transforming \eqref{eq:expl} back to the original coordinates. 
\end{proof}

Theorem~\ref{thm:GGMT} is insufficient for linearized Skyrme. The reason being that the Helmholtz equation associated with the latter is of the form
\[
-\psi'' + \big(\frac{2}{r^2} + V(r))\psi = k^2\psi
\]
which has extra repulsivity coming from the $\frac{2}{r^2}$ potential. On the level of the Schr\"odinger equation in $\R^3$ this precisely amounts to 
restricting to angular momentum $\ell=1$. So we expect that a weaker condition on $V$ than the one stated in Theorem~\ref{thm:GGMT} will suffice. This is essential for our applications
to linearized Skyrme stability. 

In fact, as already noted in \cite{GlaGroMarThi75}, for general angular momentum $\ell>0$ we are faced with the minimzation problem 
which is obtained from~\eqref{eq:claim} by replacing $\frac14\fy^2$ with   $\frac14(2\ell+1)^2 \fy^2$.  However, the scaling $$\fy(x)=\fy_1((2\ell+1)x)$$ takes us back 
to the minimization problem~\eqref{eq:claim} with an extra factor of $(2\ell+1)^{1+\frac1q}$.  Recall that Theorem~\ref{thm:GGMT} is nothing other than $\mu_q^{-p} \| r^\alpha V_-\|_p^p<1$.   	Therefore, to exclude eigenfunctions and threshold resonances of angular momentum $\ell$ 
 condition \eqref{eq:GGMT} 
needs to be multiplied on the left by a factor of 
\[
(2\ell+1)^{-p(1+\frac1q)} = (2\ell+1)^{-(2p-1)}
\]
In the summary, the sufficient GGMT criterion for absence of bound states and threshold resonances in angular momentum $\ell$ reads
\EQ{\label{eq:GGMT*}
\f{(p-1)^{p-1} \Gamma(2p)}{(2\ell+1)^{2p-1}p^p \Gamma^2(p)} \int_{0}^\infty r^{2p-1} V_{-}^p(r)\, dr <1
}
for any $\frac32\le p<\infty$. 
For linear Skyrme stability we use this criterion with $\ell=1$ and $p=4$.

\section{Tables of expansion coefficients}

\begin{table}[ht]
\centering
\caption{Expansion coefficients for approximate Skyrmion}
\begin{tabular}{|c||c|c|c|c|c|c|c|c|c|c|c|c|c|c|c|c|c|c|c|c|}
\hline
$n$ & $2$ & $3$ & $4$ & $5$ & $6$ & $7$ & $8$\\ 
 \hline 
 $c_n$ & $\frac{13039}{72146}$ & $\frac{2909}{229801}$ & $-\frac{11670}{500821}$ & $-\frac{301}{39257}$ & $\frac{621}{122813}$ & $\frac{871}{221909}$ & $-\frac{42}{55481}$\\[2pt] \hline 
$n$ & $9$ & $10$ & $11$ & $12$ & $13$ & $14$ & $15$\\ 
 \hline 
 $c_n$ & $-\frac{64}{36275}$ & $-\frac{18}{77071}$ & $\frac{94}{139483}$ & $\frac{13}{40736}$ & $-\frac{31}{158602}$ & $-\frac{9}{42953}$ & $\frac{2}{100443}$\\[2pt] \hline  
$n$ & $16$ & $17$ & $18$ & $19$ & $20$ & $21$ & $22$\\ 
 \hline 
 $c_n$ & $\frac{11}{105144}$ & $\frac{2}{76485}$ & $-\frac{5}{121747}$ & $-\frac{5}{186976}$ & $\frac{1}{92977}$ & $\frac{2}{118683}$ & $\frac{1}{1805239}$\\[2pt]  \hline 
$n$ & $23$ & $24$ & $25$ & $26$ & $27$ & $28$ & $29$\\ 
 \hline 
 $c_n$ & $-\frac{1}{122146}$ & $-\frac{1}{317774}$ & $\frac{1}{332077}$ & $\frac{1}{377050}$ & $-\frac{1}{1689008}$ & $-\frac{1}{640158}$ & $-\frac{1}{3975308}$\\[2pt]  \hline 
$n$ & $30$ & $31$ & $32$ & $33$ & $34$ & $35$ & $36$\\ 
 \hline 
 $c_n$ & $\frac{1}{1402566}$ & $\frac{1}{2606123}$ & $-\frac{1}{4324868}$ & $-\frac{1}{3550160}$ & $\frac{1}{54392687}$ & $\frac{1}{6563655}$ & $\frac{1}{21696717}$\\[2pt]  \hline 
$n$ & $37$ & $38$ & $39$ & $40$ & $41$ & $42$ & $43$\\ 
 \hline 
 $c_n$ & $-\frac{1}{16289508}$ & $-\frac{1}{21329884}$ & $\frac{1}{86396283}$ & $\frac{1}{36311458}$ & $\frac{1}{128282128}$ & $-\frac{1}{128832209}$ & $-\frac{1}{196527234}$\\[2pt]
\hline 
\end{tabular}
\label{tab:c}
\end{table}

\begin{table}[ht]
\centering
\caption{Expansion coefficients for $u_-$}
\begin{tabular}{|c||c|c|c|c|c|c|c|c|c|c|c|c|c|c|c|c|c|c|c|c|}
\hline
$n$ & $1$ & $2$ & $3$ & $4$ & $5$ & $6$\\ 
 \hline 
 $c_{-,n}$ & $c_{-,1}$ & $\frac{5384}{2621}$ & $-\frac{711}{1909}$ & $\frac{417}{3424}$ & $\frac{18}{1817}$ & $\frac{2}{3169}$\\[2pt] \hline 
$n$ & $7$ & $8$ & $9$ & $10$ & $11$ & $12$\\ 
 \hline 
 $c_{-,n}$ & $-\frac{23}{3399}$ & $-\frac{22}{4655}$ & $\frac{4}{4097}$ & $\frac{7}{2589}$ & $\frac{2}{3607}$ & $-\frac{8}{6937}$\\[2pt] \hline 
$n$ & $13$ & $14$ & $15$ & $16$ & $17$ & $18$\\ 
 \hline 
 $c_{-,n}$ & $-\frac{3}{4310}$ & $\frac{1}{2886}$ & $\frac{2}{4135}$ & $-\frac{1}{90728}$ & $-\frac{1}{3865}$ & $-\frac{1}{11699}$\\[2pt] \hline 
$n$ & $19$ & $20$ & $21$ & $22$ & $23$ & $24$\\ 
 \hline 
 $c_{-,n}$ & $\frac{1}{9323}$ & $\frac{1}{11955}$ & $-\frac{1}{36563}$ & $-\frac{1}{18412}$ & $-\frac{1}{192414}$ & $\frac{1}{37653}$\\[2pt] \hline 
$n$ & $25$ & $26$ & $27$ & $28$ & $29$ & $30$\\ 
 \hline 
 $c_{-,n}$ & $\frac{1}{79523}$ & $-\frac{1}{119499}$ & $-\frac{1}{105631}$ & $-\frac{1}{1857125}$ & $\frac{1}{285782}$ & $\frac{1}{619658}$\\[2pt]
 \hline
 \end{tabular}
 \label{tab:cm}
 \end{table}
 
 \begin{table}[ht]
\centering
\caption{Expansion coefficients for $u_+$}
\begin{tabular}{|c||c|c|c|c|c|c|c|c|c|c|c|c|c|c|c|c|c|c|c|c|}
\hline
$n$ & $1$ & $2$ & $3$ & $4$ & $5$ & $6$\\ 
 \hline 
 $c_{+,n}$ & $c_{+,1}$ & $\frac{1371}{769}$ & $\frac{1734}{3319}$ & $\frac{230}{3431}$ & $-\frac{167}{6071}$ & $\frac{33}{5231}$\\[2pt] \hline 
$n$ & $7$ & $8$ & $9$ & $10$ & $11$ & $12$\\ 
 \hline 
 $c_{+,n}$ & $\frac{59}{4580}$ & $-\frac{19}{7202}$ & $-\frac{19}{2849}$ & $\frac{1}{13495}$ & $\frac{11}{3203}$ & $\frac{4}{4737}$\\[2pt] \hline 
$n$ & $13$ & $14$ & $15$ & $16$ & $17$ & $18$\\ 
 \hline 
 $c_{+,n}$ & $-\frac{7}{4481}$ & $-\frac{2}{2217}$ & $\frac{1}{1808}$ & $\frac{1}{1529}$ & $-\frac{1}{11699}$ & $-\frac{1}{2637}$\\[2pt] \hline 
$n$ & $19$ & $20$ & $21$ & $22$ & $23$ & $24$\\ 
 \hline 
 $c_{+,n}$ & $-\frac{1}{12409}$ & $\frac{1}{5625}$ & $\frac{1}{9479}$ & $-\frac{1}{16801}$ & $-\frac{1}{12593}$ & $\frac{1}{300485}$\\[2pt] \hline 
$n$ & $25$ & $26$ & $27$ & $28$ & $29$ & $30$\\ 
 \hline 
 $c_{+,n}$ & $\frac{1}{21636}$ & $\frac{1}{56764}$ & $-\frac{1}{51904}$ & $-\frac{1}{51451}$ & $\frac{1}{307476}$ & $\frac{1}{121058}$\\[2pt]
 \hline
 \end{tabular}
 \label{tab:cp}
 \end{table}

 \begin{table}[ht]
\centering
\caption{Expansion coefficients for approximation to $1/P_5$ (Proposition \ref{prop:fs})}
\begin{tabular}{|c||c|c|c|c|c|c|c|c|c|c|c|c|c|c|c|c|c|c|c|c|}
\hline
$n$ & $1$ & $2$ & $3$ & $4$ & $5$ & $6$\\ 
 \hline 
 $r_n$ & $-\frac{437}{24}$ & $-\frac{811}{20}$ & $-\frac{229}{17}$ & $-\frac{2391}{61}$ & $-\frac{397}{30}$ & $-\frac{178}{7}$\\[2pt] \hline 
$n$ & $7$ & $8$ & $9$ & $10$ & $11$ & $12$\\ 
 \hline 
 $r_n$ & $-\frac{184}{27}$ & $-\frac{518}{27}$ & $-\frac{98}{15}$ & $-\frac{1345}{114}$ & $-\frac{86}{31}$ & $-\frac{284}{39}$\\[2pt] \hline 
$n$ & $13$ & $14$ & $15$ & $16$ & $17$ & $18$\\ 
 \hline 
 $r_n$ & $-\frac{59}{24}$ & $-\frac{156}{35}$ & $-\frac{107}{106}$ & $-\frac{86}{37}$ & $-\frac{23}{31}$ & $-\frac{23}{16}$\\[2pt] \hline 
$n$ & $19$ & $20$ & $21$ & $22$ & $23$ & $24$\\ 
 \hline 
 $r_n$ & $-\frac{9}{26}$ & $-\frac{73}{110}$ & $-\frac{5}{27}$ & $-\frac{13}{32}$ & $-\frac{1}{9}$ & $-\frac{2}{11}$\\[2pt] \hline 
$n$ & $25$ & $26$ & $27$ & $28$ & $29$ & $30$\\ 
 \hline 
 $r_n$ & $-\frac{1}{24}$ & $-\frac{3}{29}$ & $-\frac{1}{29}$ & $-\frac{2}{33}$ & $-\frac{1}{62}$ & $-\frac{1}{47}$\\[2pt]
 \hline
 \end{tabular}
 \label{tab:r1}
 \end{table}

 \begin{table}[ht]
\centering
\caption{Expansion coefficients for approximation to $1/W_0$ (Proposition \ref{prop:fsbounds})}
\begin{tabular}{|c||c|c|c|c|c|c|c|c|c|c|c|c|c|c|c|c|c|c|c|c|}
\hline
 $n$ & $0$ & $1$ & $2$ & $3$ & $4$ & $5$ & $6$ & $7$\\ 
 \hline 
 $r_n$ & $-\frac{19}{69}$ & $\frac{11}{106}$ & $\frac{37}{103}$ & $-\frac{14}{107}$ & $-\frac{23}{128}$ & $\frac{7}{81}$ & $\frac{9}{109}$ & $-\frac{3}{67}$\\[2pt] \hline 
$n$ & $8$ & $9$ & $10$ & $11$ & $12$ & $13$ & $14$ & $15$\\ 
 \hline 
 $r_n$ & $-\frac{3}{79}$ & $\frac{2}{99}$ & $\frac{2}{111}$ & $-\frac{1}{124}$ & $-\frac{1}{114}$ & $\frac{1}{376}$ & $\frac{1}{233}$ & $-\frac{1}{1792}$\\[2pt] \hline 
$n$ & $16$ & $17$ & $18$ & $19$ & $20$ & $21$ & $22$ & $23$\\ 
 \hline 
 $r_n$ & $-\frac{1}{481}$ & $-\frac{1}{8890}$ & $\frac{1}{1025}$ & $\frac{1}{4569}$ & $-\frac{1}{2336}$ & $-\frac{1}{6718}$ & $\frac{1}{5790}$ & $0$\\[2pt]
 \hline
 \end{tabular}
 \label{tab:r2}
 \end{table}

\end{appendix}

\clearpage
\bibliography{skyrme}
\bibliographystyle{plain}

\end{document}